\newcommand{\bfi}{\bfseries\itshape}
\newcommand{\rem}[1]{}
\newcommand{\remfigure}[1]{}
\newcommand{\comment}[1]{\vspace{1 mm}\par
\marginpar{\large\underline{}}\noindent
\framebox{\begin{minipage}[c]{0.95 \textwidth}
\color{blue}\bfi #1 \end{minipage}}\vspace{2 mm}\par}
\newcommand{\pa}{{\partial}}
\newtheorem{theorem}{Theorem}
\newtheorem{corollary}[theorem]{Corollary}
\newtheorem{definition}[theorem]{Definition}
\newtheorem{proposition}[theorem]{Proposition}
\newtheorem{remark}[theorem]{Remark}
\numberwithin{theorem}{section}
\newenvironment{proof}[1][Proof]{\textbf{#1.} }{\ \rule{0.5em}{0.5em}}
\def\0{{\bf 0}}
\newcommand{\pp}[2]{\frac{\partial #1}{\partial #2}}
\newcommand{\dd}[2]{\frac{d #1}{d #2}}
\newcommand{\dede}[2]{\frac{\delta #1}{\delta #2}}
\begin{document}

\title{\vspace{-3mm}
Geodesic flows on semidirect-product Lie groups:\\
geometry of singular measure-valued solutions\\
\vspace{-.7mm}
}
\author{
Darryl D. Holm$^{1,\,2}$ and Cesare Tronci$^{1,3}\!$\\ \vspace{-.4cm}
\\
{\footnotesize $^1$ \it Department of Mathematics, Imperial College London,
180 Queen's Gate, London SW7 2AZ, UK}\\
{\footnotesize $^2$ \it Institute for Mathematical Sciences, Imperial College
London, 53 Prince's Gate, London SW7 2PG, UK
} \\
{\footnotesize $^3$\,\it TERA Foundation for Oncological Hadrontherapy,
11 V. Puccini, Novara 28100, Italy}
\\ 
}

\date{\vspace{-.2cm}\normalsize\today}

\maketitle
\vspace{-1cm}
\begin{abstract}\noindent
The EPDiff equation (or dispersionless Camassa-Holm equation  in 1D) is a well known example of geodesic motion on the Diff group of smooth invertible maps (diffeomorphisms). Its recent two-component extension governs geodesic motion on the semidirect product ${\rm Diff}\,\circledS\,{\cal F}$, where $\mathcal{F}$ denotes the space of scalar functions. This paper generalizes the second construction to consider geodesic motion on ${\rm Diff}\,\circledS\mathfrak{\,g}$, where $\mathfrak{g}$ denotes the space of scalar functions that take values on a certain Lie algebra (for example, $\mathfrak{g}=\mathcal{F}_{\,} \otimes_{\,}\mathfrak{so}(3)$). Measure-valued delta-like solutions are shown to be momentum maps possessing a dual pair structure, thereby extending previous results for the EPDiff equation. The collective Hamiltonians are shown to fit into the Kaluza-Klein theory of particles in a Yang-Mills field and these formulations are shown to apply also at the continuum PDE level. In the continuum description, the Kaluza-Klein approach produces the Kelvin circulation theorem.
\end{abstract}

\vspace{-.4cm}

\tableofcontents

\bigskip

\section{Introduction}\label{sec.1}

\subsection{Singular solutions in continuum mechanics}
Singular measure-valued solutions arise in the study of many continuum systems. A famous example of singular solutions in ideal fluids is the point vortex solution for the Euler vorticity equation on the plane. Point vortices are delta-like solutions that follow a multi-particle dynamics. In three dimensions one extends this concept to vortex filaments or vortex sheets, for which the vorticity is supported on a lower dimensional submanifold (1D or 2D respectively) of the Euclidean space $\mathbb{R}^3$. These solutions form an invariant manifold. However, they are not expected to be created by fluid motion from smooth initial conditions.
Singular solutions also exist in plasma physics as magnetic vortex lines and in kinetic theory as single-particle solutions, see, e.g. \cite{GiHoTr07} for discussion and references.

Another well-known fluid model admitting singular solutions is the Camassa-Holm (CH) equation, which is an integrable Hamiltonian fluid equation describing shallow-water solitons moving in one dimension  \cite{CaHo1993}.  In the dispersionless limit, the CH solitons are singular solutions whose velocity profile has a sharp peak at which the derivative is discontinuous. These are the {\it peakon} solutions of the CH equation. In higher dimensions the dispersionless CH equation is often called EPDiff, which is short for {\it Euler-Poincar\'e equation on the diffeomorphisms} \cite{HoMa2004,HoMaRa}. EPDiff also has applications in other areas such as turbulence \cite{FoHoTi01} and imaging \cite{HoRaTrYo2004,HoTrYo2007}.
EPDiff and the Euler fluid equation share the important property of being geodesic motions on diffeomorphism groups. In particular, EPDiff involves the entire Diff group, while its restriction to  the volume-preserving transformations Diff$_{\rm vol}$ yields the Euler fluid equation. Moreover, the EPDiff equation has the additional interesting feature of showing \emph{spontaneous} emergence of singular solutions from any confined initial velocity configuration.
In 1D this result follows from the \emph{steepening lemma} \cite{CaHo1993}. The dynamical variable is the fluid velocity (or momentum) and the peaks of the singular solutions follow the Lagrangian trajectories of fluid particles. Recent work has proven that the singular solutions of EPDiff represent a certain kind of momentum map \cite{HoMa2004}. This result answers the fundamental question  concerning the geometric nature of these solutions. In particular, they are equivariant momentum maps admitting a dual pair representation similar to that found for vortex filaments in \cite{MaWe83}.

The Camassa-Holm equation in 1D also possesses a two-component integrable extension (CH2) \cite{ChLiZh2005,Falqui06,Ku2007}, which represents geodesic motion on the semidirect product ${\rm Diff}\,\circledS\,{\cal F}$, where $\mathcal{F}$ denotes the space of real scalar functions.
This system of equations involves both fluid density and momentum, and it also possesses singular solutions in the latter variable. CH peakon solutions form an invariant subspace of CH2 solutions for the case that the density vanishes identically. A change in the metric allows delta-like singularities in {\it both} variables, not just the fluid momentum, although such a change may also destroy integrability \cite{HoTrYo2007}.
This paper investigates the dynamics and the geometric nature of singular solutions of the geodesic equations on semidirect product Lie groups. In particular, it shows that these solutions are still momentum maps; so that all the results for momentum maps in the case of EPDiff also hold for this system.

\subsection{The Camassa-Holm and EPDiff equations}
The dispersionless Camassa-Holm equation is a geodesic flow on the infinite-dimensional Lie group of smooth invertible maps (diffeomorphisms) of either the real line or a periodic interval. This geodesic flow exhibits spontaneous emergence of singularities from \emph{any} confined smooth initial velocity profile. The CH equation is a 1+1 partial differential equation (PDE) for the 1D fluid velocity vector $u$ as a function of position $x\in\mathbb{R}$ and time $t\in\mathbb{R}$ and is written as  \cite{CaHo1993}
\begin{equation}
u_t+2\kappa u_x-u_{xxt}+3uu_x=2u_x u_{xx}+u u_{xxx}
\,.
\label{CHeqn}
\end{equation}
The present work focuses on the CH equation (\ref{CHeqn}) in the dispersionless case for which  $\kappa=0$ and considers periodic boundary conditions or sufficiently rapid decay at infinity, so that boundary terms do not contribute in integrations by parts. Being geodesic, the CH equation can be recovered from Hamilton's principle with a quadratic Lagrangian, 
\[
\delta\!\int_{t_0}^{t_1}\!\!L(u)\,{\rm d}t=0
\qquad\text{ with }\qquad
L(u)=\frac12\int \!u(x) \,(1-\partial_x^2)\,u(x)\,\,{\rm d}x
\,.
\]
In particular, it follows from the
Euler-Poincar\'e variational principle defined on $\mathfrak{X}(\mathbb{R})=T_e{\rm Diff}(\mathbb{R})$, the Lie algebra of the Diff group, consisting of vector fields on the line \cite{HoMaRa}.
The corresponding Lie-Poisson Hamiltonian formulation follows from the Legendre transform
\[
m=\frac{\delta L}{\delta u}=u-u_{xx}
\quad
\Rightarrow
\quad
u=(1-\partial_x^2)^{-1} m
\,,
\]
in which $m\in\mathfrak{X}^*(\mathbb{R})$. Here $\mathfrak{X}^*(\mathbb{R})$ is the space of one-form densities, which are dual  to the vector fields on the line under the $L^2$ pairing.
After the Legendre transformation, the Hamiltonian becomes
\[
H(m)=\frac12\int\!m\,(1-\partial_x^2)^{-1}m\,{\rm d}x
\quad
\Rightarrow
\quad
u=\frac{\delta H}{\delta m}
\]
and the corresponding Lie-Poisson form of equation (\ref{CHeqn}) with $\kappa=0$ is \cite{CaHo1993}
\[
m_t
+um_x +2u_xm
=0
\,.
\]
The main result for this equation is its complete integrability, which is guaranteed by its bi-Hamiltonian structure \cite{CaHo1993}. Another important feature, called the \emph{steepening lemma} \cite{CaHo1993}, provides the mechanism for the spontaneous emergence of the singular solutions (the peaked solitons, or \emph{peakons} mentioned earlier) from any confined initial velocity distribution. 

\paragraph{EPDiff.}
Except for integrability, these one-dimensional results can be generalized to two or three dimensions, in which the equation becomes EPDiff, namely,
\begin{equation}
\partial_t\mathbf{m}
- \mathbf{u}\times{\rm curl\,}\mathbf{m}
+ \nabla (\mathbf{u}\cdot\mathbf{m})
+ \mathbf{m}({\rm div\,}\mathbf{u})
=0
\,.
\label{EPDiff-3D}
\end{equation}
The EPDiff Hamiltonian on $\mathfrak{X}^*(\mathbb{R}^3)$ arising from this generalization is given by
\begin{equation}
H({\bf m})
=\frac12\int\!{\bf m}\cdot(1-\alpha^2\Delta)^{-1}{\bf m}\,{\rm
d}{\bf x}
=:{\frac12}\|{\bf m}\|^2
\,,
\label{Ham-EPDiff}
\end{equation}
in which $\alpha$ is the length-scale over which the velocity is smoothed relative to the momentum via the relation
\begin{equation}
{\bf u}=(1-\alpha^2\Delta)^{-1}{\bf m}
\,.
\label{vel-EPDiff}
\end{equation}
The singular solutions in higher dimensions are written in the momentum representation
as
\begin{equation}
{\bf m(x},t)
=\sum_{i=1}^N\int{\bf P}_{\!i}(s,t)\,\delta({\bf x-Q}_i(s,t))\,{\rm d}s
\,,
\label{singsolepdiff}
\end{equation}
where $s$ is a variable of dimension $k<3$. These solutions represent moving filaments or sheets, when $s$ has dimension 1 or 2, respectively.

A further generalization replaces the kernel that defines the norm of $\|{\bf m}\|$ in (\ref{Ham-EPDiff}) via the relation (\ref{vel-EPDiff}) with the general convolution
\begin{equation}
G\,*\,{\bf m}=\int G({\bf x-x'})\,\bf m(x')\,{\rm d}{\bf x}'
\,,
\label{vel-pulson}
\end{equation}
involving an arbitrary Green's function, or kernel, $G$. This generalization produces the metric,
\begin{equation}
H({\bf m})
=\frac12\int\!{\bf m}\cdot (G\,*\,{\bf m})
\,{\rm d}{\bf x}
=:{\frac12}\|{\bf m}\|_G^2
\,,
\label{Ham-EPDiff-invmetric}
\end{equation}
which is the Lie-Poisson Hamiltonian for EPDiff (\ref{EPDiff-3D}) and is, thus, invariant under its evolution.
The dynamics of $({\bf Q}_i,{\bf P}_{\!i})$ with $i=1,\dots,N$ for the  singular pulson solutions in (\ref{singsolepdiff}) is given by canonical Hamiltonian dynamics with the Hamiltonian
\[
\mathcal{H}=\frac12\sum_{i,j}\iint {\bf P}_{\!i}(s,t)\cdot{\bf P}_{\!j}(s',t)\,\,G({\bf Q}_i(s,t)-{\bf Q}_j(s',t))\,\,{\rm d}s\,{\rm d}s'
\,,
\]
obtained by evaluating the Lie-Poisson Hamiltonian for EPDiff on its singular solution set.
\begin{definition}[Momentum map \cite{MaRa99}]
Given a Poisson manifold (i.e. a manifold $P$ with a Poisson bracket  $\{\cdot,\cdot\}$ defined on the functions $\mathcal{F}(P)$)
and a Lie group $G$ acting on it by Poisson maps (that is, via the Poisson structure), a momentum map $\mathbf{J}:P\rightarrow\mathfrak{g}^*$ is defined by the following formula,
\[
\big\{F(p),\langle \mathbf{J}(p),\xi\rangle\big\}=\xi_P[F(p)]
\qquad\quad\forall\, F\in\mathcal{F}(P),\quad\forall\,\xi\in\mathfrak{g}
\,.
\]
Here $\mathcal{F}(P)$ denotes the functions on $P$, $\mathfrak{g}$ is the Lie algebra of $G$ and $\xi_P$ is the vector field given by the infinitesimal generator of the action of $G$ on the manifold $P$ by Poisson maps.
\end{definition}

\begin{theorem}
The singular solution (\ref{singsolepdiff}) is a \emph{momentum map} \cite{HoMa2004}.
\end{theorem}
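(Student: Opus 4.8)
The plan is to recognize the singular-solution ansatz (\ref{singsolepdiff}) not as a special solution to be verified, but as the image of the standard momentum map generated by the cotangent lift of the action of ${\rm Diff}(\mathbb{R}^n)$ on a space of embedded submanifolds. First I would set up the canonical phase space. Let $S$ be the $k$-dimensional manifold parametrized by $s$ (a circle for filaments, a surface for sheets), and take $P=T^*{\rm Emb}(S,\mathbb{R}^n)$, the cotangent bundle of the manifold of embeddings ${\bf Q}:S\to\mathbb{R}^n$, with $N$ copies accounting for the sum over $i$. The pairs $({\bf Q}_i,{\bf P}_{\!i})$ are canonically conjugate fields on $S$, so $P$ carries the canonical (weakly nondegenerate) symplectic form and hence the canonical Poisson bracket required by the Definition above.

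Next I would specify the action and check the hypotheses of the momentum-map Definition. The group $G={\rm Diff}(\mathbb{R}^n)$ acts on embeddings by left composition, $\eta\cdot{\bf Q}=\eta\circ{\bf Q}$, and I would lift this to $P$ by the cotangent lift. Cotangent-lifted actions automatically preserve the canonical symplectic structure, so $G$ acts by Poisson maps, which is exactly the hypothesis needed, with target $\mathfrak{g}^*=\mathfrak{X}^*(\mathbb{R}^n)$, the space of one-form densities in which ${\bf m}$ lives.

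The key computational step is to apply the general formula for the momentum map of a cotangent lift, $\langle{\bf J}(\alpha_q),\xi\rangle=\langle\alpha_q,\xi_P(q)\rangle$, where $\xi_P$ denotes the infinitesimal generator of the base action. For $\xi\in\mathfrak{X}(\mathbb{R}^n)$ the generator of the composition action at the embedding ${\bf Q}$ is the vector field evaluated along the image, $s\mapsto\xi({\bf Q}(s))$. Pairing against the momenta and inserting the identity $\xi({\bf Q}_i(s))=\int\xi({\bf x})\,\delta({\bf x}-{\bf Q}_i(s))\,{\rm d}{\bf x}$ gives
\[
\langle{\bf J}({\bf Q},{\bf P}),\xi\rangle
=\sum_{i=1}^N\int{\bf P}_{\!i}(s)\cdot\xi({\bf Q}_i(s))\,{\rm d}s
=\int\!\Big[\sum_{i=1}^N\int{\bf P}_{\!i}(s)\,\delta({\bf x}-{\bf Q}_i(s))\,{\rm d}s\Big]\cdot\xi({\bf x})\,{\rm d}{\bf x}
\,.
\]
The bracketed distribution is precisely the ansatz (\ref{singsolepdiff}), so under the $L^2$ pairing between $\mathfrak{X}^*$ and $\mathfrak{X}$ one has $\langle{\bf J}({\bf Q},{\bf P}),\xi\rangle=\langle{\bf m},\xi\rangle$ for every $\xi$; hence ${\bf J}({\bf Q},{\bf P})={\bf m}$, and the singular solution is the momentum map.

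The main obstacle is analytic rather than algebraic: one must justify that the finite-dimensional cotangent-lift theory, namely the existence of an equivariant momentum map with the stated formula and the fact that the lifted action is Poisson, carries over to the infinite-dimensional, weakly symplectic setting of ${\rm Emb}(S,\mathbb{R}^n)$, and that the manipulations with delta distributions and with the (non-reflexive) $L^2$ pairing are well defined. Granting the standard infinite-dimensional momentum-map machinery, the verification of the defining bracket relation $\{F,\langle{\bf J},\xi\rangle\}=\xi_P[F]$ reduces to the explicit identification displayed above.
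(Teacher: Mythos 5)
Your proposal is correct and follows essentially the same route as the paper, which sketches precisely this construction: the left composition action of ${\rm Diff}(\mathbb{R}^n)$ on ${\rm Emb}(S,\mathbb{R}^n)$, its cotangent lift to $T^*{\rm Emb}(S,\mathbb{R}^n)$, and the resulting momentum map $\mathbf{J}(\mathbf{Q},\mathbf{P})=\int\mathbf{P}\,\delta(\mathbf{x}-\mathbf{Q})\,ds$, with full proofs deferred to \cite{HoMa2004}. Your explicit cotangent-lift pairing computation and your caveat about the infinite-dimensional, weakly symplectic setting match both the paper's outline and the verification strategy it carries out in detail for the semidirect-product generalization in Theorem \ref{singsolnmommap-thm}.
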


To illustrate this theorem, fix a $k$-dimensional manifold $S$ immersed in $\mathbb{R}^n$ and consider the embedding ${\mathbf{Q}_i:S\rightarrow\mathbb{R}^n}$. Such embeddings form
a smooth manifold $\text{Emb}(S,\mathbb{R}^n)$ and thus one can consider its cotangent bundle $(\mathbf{Q}_i,\mathbf{P}_i)\in
T^*\text{Emb}(S,\mathbb{R}^n)$. Consider $\text{Diff}(\mathbb{R}^n)$
acting on $\text{Emb}(S,\mathbb{R}^n)$ on the left by composition of functions $\left(g\,\mathbf{Q}=g\circ\mathbf{Q}\right)$
and lift this action to $T^*\text{Emb}(S,\mathbb{R}^n)$. This procedure constructs the singular solution momentum map for EPDiff,
\[
\mathbf{J}:T^*\text{Emb}(S,\mathbb{R}^n)\rightarrow\mathfrak{X}^*(\mathbb{R}^n)
\qquad\text{with}\qquad
\mathbf{J}(\mathbf{Q},\mathbf{P})=\!\int\mathbf{P}(s,t)\,\delta(\mathbf{x}-\mathbf{Q}(s,t))\,ds
\,.
\]
This construction was extensively discussed in \cite{HoMa2004}, where proofs were given in various cases. A key result is that the momentum map constructed this way is {\it equivariant}, which means it is also a Poisson map. This explains why the coordinates $(\mathbf{Q},\mathbf{P})$ undergo Hamiltonian dynamics.
There is also a right action $\left(\mathbf{Q}\,g=\mathbf{Q}\circ g\right)$, whose momentum map corresponds to the canonical one-form ${\bf P}\cdot d{\bf Q}$ on $T^*\text{Emb}$.

The EPDiff equation has been applied in several contexts, ranging from turbulence modeling \cite{FoHoTi01} to the design of imaging techniques \cite{HoRaTrYo2004,HoTrYo2007}. Its CH form in one dimension has been widely studied because of the integrability of its singular solutions  ({\em peakons}).
The properties of geodesic flows on the diffeomorphisms regarded as a Lie group have played a central role in determining the behavior of the singular solutions of EPDiff. This suggests that we pursue further investigations of the emergence of singularities in the solutions of geodesic equations on Lie groups.

\subsection{The two-component Camassa-Holm system}

In recent years, the Camassa-Holm equation has been extended \cite{ChLiZh2005,Falqui06,Ku2007} in order to combine the integrability property with compressibility, which introduces a pressure term in the equation for the fluid momentum. The resulting system (CH2) is a geodesic motion equation on Diff$\,\circledS\,\mathcal{F}$, given as an Euler-Poincar\'e equation on the semidirect product Lie algebra $\mathfrak{X}^{\,}\circledS\,\mathcal{F}$. In the general case, the Euler-Poincar\'e equations are written on the dual of a semidirect product Lie algebra $\mathfrak{g}^{\,}\circledS\,V$ as \cite{HoMaRa}
\begin{align*}
\frac{d}{dt}\frac{\delta L}{\delta (\xi,\,a)}=-\,{\rm ad}^*_{(\xi,\,a)\,}\frac{\delta L}{\delta (\xi,\,a)}
\qquad
(\xi,\,a)\in \mathfrak{g}^{\,}\circledS\,V
\end{align*}
whose components are
\begin{align*}
\frac{d}{dt}\frac{\delta L}{\delta \xi}=-\,{\rm ad}^*_{\xi\,}\frac{\delta L}{\delta \xi}+\frac{\delta L}{\delta a}\diamond a
\,,\qquad
\frac{d}{dt}\frac{\delta L}{\delta a}=-\,\xi\,\frac{\delta L}{\delta a}
\end{align*}
where the notation $\xi\,{\delta L}/{\delta a}$ stands for the (left) Lie algebra action of $\mathfrak{g}$ on $V^*$.

The integrable CH2 equations are derived from the following variational principle
on $\mathfrak{X}^{\,}\circledS\,\mathcal{F}$
\[
\delta\!\int_{t_0}^{t_1}\!\!L(u,\rho)\,{\rm d}t=0
\qquad\text{ with }\qquad
L(u)=\frac12\int \!u \,(1-\partial_x^2)u\,\,{\rm d}x+\frac12\int \!\rho^2\,\,{\rm d}x
\,.
\]
Explicitly, the CH2 equations are
\begin{align*}
\rho_t=&-(\rho u)_x
\,,\\
u_t -u_{xxt}=&-3uu_x+2u_x u_{xx}+u u_{xxx}-\rho\rho_x
\,.
\end{align*}
These equations describe geodesic motion with respect to the $H^1$ metric in $u$ and the
$L^2$ metric in $\rho$. Legendre transforming yields the metric Hamiltonian
\[
H(m,\rho)=
\frac12\int\!m \,(1-\partial^2)^{-1}\,m\,{\rm d}x +
\frac12\int\!\rho^2\,{\rm d}x
\,.
\]
Extending to more general metrics (Green's functions) yields the Hamiltonian
\[
H(m,\rho)=
\frac12\iint\!m(x) \,G_1(x-x')\,m(x')\,{\rm d}x\,{\rm d}x'
+
\frac12\iint\!\rho(x) \,G_2(x-x')\,\rho(x')\,{\rm d}x\,{\rm d}x'
\,.
\]
This Hamiltonian yields the Lie-Poisson equations
\begin{align*}
\rho_t=&-(\rho u)_x
\,,
\\
m_t=&-u m_x-2mu_x-\rho \lambda_x
\,,
\end{align*}
in which $u$ and $\lambda $ are defined by the variational derivatives,
\[
u=\dede{H}{m}=G_1*m
\quad\hbox{and}\quad
\lambda=\dede{H}{\rho}=G_2*\rho
\,,
\]
and the symbol $*$ denotes convolution.
For example, we may write the Euler-Poincar\'e equations for the $H^1$ norms associated to the two length-scales $\alpha_1$ and $\alpha_2$  (i.e., $G_i=(1-\alpha^2_i\partial^2)^{-1}$) as:
\begin{align*}
\lambda_t-\alpha_2^2\lambda_{xxt}&=-\left(u\lambda-\alpha_2^2 u \lambda_{xx}\right)_x
\,,\\
u_t-\alpha_1^2 u_{xxt}&=-3uu_x+2\alpha_1^2u_x u_{xx}+\alpha_1^2u u_{xxx}
-\lambda_x\left(\lambda-\alpha_2^2\lambda_{xx}\right)
\,.
\end{align*}
The second equation may also be expressed in hydrodynamic form as
\[
u_t+u u_x = -\,p_x
\,,\quad\hbox{with pressure}\quad
p: = G_1*\left(u^2+\frac{\alpha_1^2}{2}u_x^2+\frac12\lambda^2-\frac{\alpha_2^2}{2}\lambda_x^2\right)
\,.
\]
The integrable case CH2 is recovered from these equations for $\alpha_2=0$, although this case does not allow for singular delta-like solutions in the density variable  (or depth variable, in the shallow water wave interpretation), since $\lambda=\rho$ when $\alpha_2=0$.

One may also extend the problem to higher dimensions by taking the Hamiltonian
\begin{equation}\label{EPGosH-Ham}
H({\bf m},\rho)=
\frac12\iint\!{\bf m}({\bf x}) \,G_1({\bf x-x}')\,{\bf m}({\bf x}')\,{\rm d}^n{\bf x}\,{\rm d}^n{\bf x}'+
\frac12\iint\!\rho({\bf x})\,G_2({\bf x-x}')\,\rho({\bf x}')\,{\rm d}^n{\bf x}\,{\rm d}^n{\bf x}'
\end{equation}
which represents the two-component extension of EPDiff and is denoted by EP(Diff$\,\circledS\,\mathcal{F}$).
In three dimensions, the corresponding Lie-Poisson equations assume the form
\begin{align*}
\rho_t=&\,-\nabla\cdot(\rho {\bf u})
\,,\\
\mathbf{m}_t=\,&
 \mathbf{u}\times{\rm curl\,}\mathbf{m}
- \nabla (\mathbf{u}\cdot\mathbf{m})
- \mathbf{m}({\rm div\,}\mathbf{u})
-\rho\nabla\lambda
\,,
\end{align*}
where ${\bf u}=G_1*{\bf m}$ and $\lambda=G_2*\rho$. 
These expressions can also be written in a covariant form by
using the Lie derivative $\pounds_u$ with respect to the velocity vector field $u\in \mathfrak{X}(\mathbb{R}^n)$ of a one-form density
\[
m
=\mathbf{m}\cdot{\rm d}{\bf x}\otimes{\rm d}^n{\bf x}
\in \mathfrak{X}^*(\mathbb{R}^n)
\,,
\]
whose co-vector components are ${\bf m}$.
The {\bfi diamond operator} $(\, \diamond\,)$ corresponding to the Lie derivative $\pounds_u$ is defined by
\begin{equation}
\Big\langle \rho,-\,\pounds_{\bf u} \lambda\Big\rangle
_{V\times V^*}
:=
\Big\langle\rho\diamond \lambda,\bf u\Big\rangle
_{\mathfrak{X}^*\times \mathfrak{X}}
\,,
\label{diamond.def}
\end{equation}
where in the present case $\lambda\in V^*=\mathcal{F}$ is a scalar function while $\rho\in V={\rm Den}$ is a density variable and $\langle\, \cdot\,,\cdot\, \rangle $ is $L^2$ pairing in the corresponding spaces. In this notation, one writes the general case for $\rho\in V$ and $\lambda\in V^*$ as
\begin{align}\nonumber
\rho_t+\pounds_{u}\,\rho
&=
0
\,,\\
{m}_t+\pounds_{u}\,{m}
&=
\rho\diamond\lambda
\,.
\label{EPDiffDen}
\end{align}
Upon recalling that the metrics $G_i$ are Green's functions associated to the differential operators $Q_i$ (that is, $Q_i\cdot G_i=\delta(x-x')$),
one finds the following Lagrangian
\[
L({\bf u},\lambda)=\frac12\int \!{\bf u} \ Q_1 {\bf u}\,\,{\rm d}^n{\bf x}+\frac12\int \!\lambda\
 Q_2\,\lambda\,{\rm d}^n{\bf x}
\]
whose Euler-Poincar\'e equations recovers the EP(Diff$\,\circledS\,\mathcal{F}$) equations (\ref{EPDiffDen}),
upon substituting the Legendre transforms  $\mathbf{m}=Q_1\, {\bf u}$ and $\rho=Q_2\,\lambda$.

Specializing the operator $Q_2$ to $Q_2=\Delta$ (that is, taking $\alpha_2\to\infty$ above) allows this system to be interpreted physically as an incompressible charged fluid. In this case, $\rho$ is interpreted as charge density, rather than mass density, and $\lambda$ is the electrostatic interaction potential, satisfying $\Delta \lambda =\rho$ for the Coulomb potential. The pure CH2 case ($Q_2=1$, or $\alpha_2\to0$) corresponds to a delta-like interaction potential, which also appears in the integrable Benney system \cite{Be1973,Gi1981}. The application of the full semidirect-product framework to the imaging method called {\bfi metamorphosis} was studied in \cite{HoTrYo2007}. The relation of the latter work with the current investigation will be traced further in the sections below.

\subsection{Plan of the paper} This paper follows \cite{HoMa2004} in studying the singular solutions of geodesic flows on semidirect-product Lie groups by identifying them with momentum maps. The new features here are: (i) we explain their geometric nature in the context of the Kaluza-Klein formulation of a particle in a Yang-Mills field and (ii) we extend the applications of this approach to systems of equations governing geodesic flows on semidirect-product Lie groups with a non-Abelian gauge group.

The next section first shows how this works when considering the Abelian gauge group of functions $\cal F$. Momentum maps for left and right actions of the diffeomorphisms and the Lie group of gauge symmetry are derived that recover the corresponding conservation laws. The key observation is that the collective Hamiltonian for the singular solution momentum map is a Kaluza-Klein Hamiltonian, thereby recovering the conservation of the gauge charge.
In section 3 we show how this observation extends to the consideration of a non-Abelian gauge group $G$ for particles carrying a spin-like variable. The introduction of a principal $G$-bundle becomes important in this context, since it best clarifies the
reduction processes that are involved in the collective Hamiltonian system.
Again, the collective Hamiltonian arises from a Kaluza-Klein formulation, after reduction by the gauge symmetry. We show that the Kaluza-Klein construction allows one to extend all the geometric features found in \cite{HoMa2004} for the EPDiff equation to the semidirect-product geodesic flows under consideration.

The last section is devoted to illustrating how the continuum geodesic equations on semidirect products also arise naturally from a Kaluza-Klein formulation. In this formulation, the advection equation of the gauge charge density is recovered as the conservation of the momentum conjugate to a cyclic variable. The Kaluza-Klein approach  also leads to a Kelvin circulation theorem in the continuum description.

\section{The singular solution momentum map for EP(Diff$\,\circledS\boldsymbol{\mathcal{F}}$)}\label{sec.2}

Given the semidirect-product Lie-Poisson equations (\ref{EPDiffDen}), direct substitution shows that they allow the singular solutions
\begin{equation}
\big({\bf m},\rho\big)=\sum_{i=1}^N\int\!\big({\bf P}_i(s,t),w_i(s)\big)\,\delta\!\left({\bf
x-Q}_i(s,t)\right)\,{\rm d}^ks
\,,
\label{SDsingsoln}
\end{equation}
where $s$ is a coordinate on a submanifold $S$ of $\mathbb{R}^n$, exactly as in the case of EPDiff. If dim$\,S=1$, then this corresponds to fluid variables supported on a filament, while dim$\,S=2$ yields sheets of fluid density and momentum. The dynamics of $({\bf Q}_i,{\bf P}_i, w_i)$ is given by
\begin{align*}
\pp{{\bf Q}_i(s,t)}{t}=&\ \sum_j\int\!{\bf P}_j(s',t)\, G_1({\bf Q}_i(s,t)-{\bf Q}_j(s',t))\ {\rm d}^ks'
\,,\\
\pp{{\bf P}_i(s,t)}{t}=&\ -\sum_j\int\!
{\bf P}_i(s,t)\cdot{\bf P}_j(s',t)\,\text{\large$\nabla$}_{\!{\bf Q}_i}G_1({\bf Q}_i(s,t)-{\bf Q}_j(s',t))\ {\rm d}^ks'
\\
&\
- \sum_j\int\! w_i(s)\,w_j(s')\,
\text{\large$\nabla$}_{\!{\bf Q}_i}G_2({\bf Q}_i(s,t)-{\bf Q}_j(s',t))\
{\rm d}^ks'
\,,
\end{align*}
with $\partial_t{w}_i(s)=0,\ \forall i$.

Recalling the geometric nature of the pulson solution of EPDiff and following the reasoning in \cite{HoMa2004},
one interprets ${\bf Q}_i$ as a smooth embedding in Emb$(S,\mathbb{R}^n)$ and $P_i={\bf P}_i\cdot{\rm d}{\bf Q}_i$ (no sum) as the canonical one-form on $T^*$Emb$(S,\mathbb{R}^n)$ for the $i$-th pulson.
In the case of EP(Diff$\,\circledS\,\mathcal{F}$), the weights $w_i$ for $i=1,\dots,N$ are considered as maps $w_i:S\to\mathbb{R}^*$. That is, the weights $w_i$ are  distributions on $S$, so that $w_i\in{\rm
Den}(S)$, where ${\rm Den}:=\mathcal{F}^*$. In particular, we consider the triple
\[
({\bf Q}_i,{\bf P}_i, w_i)\ \, \text{\large $\in$} \ \,
\rem{ 
\bigcup_{{\bf Q}_i\in{\rm Emb}}\!\left(\,T_{{\bf Q}_i}^*{\rm Emb}(S,\mathbb{R}^n)\,\bigoplus\,{\rm Den}(S)\, \right)\ =:\
}    
T^*{\rm Emb}(S,\mathbb{R}^n)\,\times\,{\rm Den}(S)
\,.
\]
and prove the following.

\rem{ 
\begin{remark}[EP(Diff$\,\circledS\,\mathcal{F}$) and Vlasov moment equations]
The ${\rm EP(Diff\,\circledS\,\mathcal{F})}$ equations (\ref{EPGosH-Ham}) have already
been considered in \cite{GiHoTr07} where they have been shown to arise from
a truncation of a geodesic Vlasov equation in kinetic theory, representing
a geodesic motion on the group of canonical transformations (symplectomorphisms).
Indeed, the equations (\ref{EPGosH-Ham}) arose as a first-order truncations
of the kinetic moment hierarchy.

The identification of the bundle $T^*{\rm Emb}\bigoplus{\rm Den}$ for
the geometric characterization of the singular solutions has only been possible
to us thanks to the Fock structure of moment dynamics \cite{GiHoTr08}.
Indeed, the bundle $T^*{\rm Emb}\bigoplus{\rm Den}$ is the first order truncation
of the bosonic Fock bundle $ST^*{\rm Emb}:=\bigcup_{\bf Q\!}\big(\bigvee T_{\bf Q}^*{\rm Emb}\big)$, where $\bigvee W:= \bigoplus_m \left(\bigvee^m W\right)$ and
$\bigvee^m W:=S\left(\bigotimes^m W \right)$ ($S$ is the usual symmetrizer).
Such Fock bundles are of vital importance for Vlasov kinetic moments,
since this is where their dynamics takes place \cite{GiHoTr08}. Truncating at the first order, one recovers
\[
ST^*{\rm Emb}(S,\mathbb{R}^n)
\,=\,
\bigcup_{\bf Q}\left(\bigvee\,^{\!0\,}T^*_{\bf Q\,}{\rm Emb}\ \bigoplus\, \bigvee\,^{\!1\,}T^*_{\bf Q\,}{\rm Emb}\right)
=\,
\bigcup_{\bf Q}
\left(\,{\rm Den}(S)\,\bigoplus\,T^*_{\bf Q\,}{\rm Emb}(S,\mathbb{R}^n)\, \right)
\]
This is a fundamental fact in the characterization of the singular solution momentum map, which otherwise would not be possible.
\end{remark}
}      

\begin{theorem}[Singular solution momentum map]
\label{singsolnmommap-thm}
The singular solutions (\ref{SDsingsoln}) of the semidirect-product Lie-Poisson equations (\ref{EPDiffDen}) are given by
\[
\big({\bf m},\rho\big)=\sum_{i=1}^N\int\!\big({\bf P}_i(s,t),w_i(s)\big)\,\delta\!\left({\bf
x-Q}_i(s,t)\right)\,{\rm d}^ks
\,.
\]
These expressions for $({\bf m},\rho)\in\mathfrak{X}^*(\mathbb{R}^n)\ \text{\large $\circledS$}\ {\rm Den}(\mathbb{R}^n)$ identify a momentum map
\[
{\bf J}:\underset{i=1}{\overset{N}{\text{\LARGE$\times$}}}\Big(T^*{\rm Emb}(S,\mathbb{R}^n)\,\times\,{\rm Den}(S)\Big)
\,\rightarrow\,
\mathfrak{X}^*(\mathbb{R}^n)\ \text{\large $\circledS$}\ {\rm Den}(\mathbb{R}^n)
\,.
\]
\end{theorem}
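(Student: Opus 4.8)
The plan is to realize $\mathbf{J}$ as the momentum map of a \emph{cotangent-lifted} action of $G={\rm Diff}(\mathbb{R}^n)\,\circledS\,\mathcal{F}(\mathbb{R}^n)$, exactly as in the EPDiff case recalled above, but on an enlarged configuration space carrying the extra gauge/charge degree of freedom. First I would fix the algebraic setup: the Lie algebra is $\mathfrak{g}=\mathfrak{X}(\mathbb{R}^n)\,\circledS\,\mathcal{F}(\mathbb{R}^n)$ with elements $\xi=(\mathbf{v},f)$, and its dual under $L^2$ pairing is $\mathfrak{g}^*=\mathfrak{X}^*(\mathbb{R}^n)\,\circledS\,{\rm Den}(\mathbb{R}^n)$, so that $\langle(\mathbf{m},\rho),(\mathbf{v},f)\rangle=\int\mathbf{m}\cdot\mathbf{v}\,{\rm d}^n x+\int\rho\,f\,{\rm d}^n x$. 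With this pairing the target of $\mathbf{J}$ is correct by construction.

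Next I would introduce, for each $i$, the configuration manifold $M_i={\rm Emb}(S,\mathbb{R}^n)\times\mathcal{F}(S)$, writing its points as $(\mathbf{Q}_i,\theta_i)$, and let $G$ act by $(\varphi,a)\cdot(\mathbf{Q}_i,\theta_i)=(\varphi\circ\mathbf{Q}_i,\,\theta_i+a\circ\mathbf{Q}_i)$. A short check of the semidirect-product group law shows this is a genuine left action whose infinitesimal generator is $\xi_{M_i}(\mathbf{Q}_i,\theta_i)=(\mathbf{v}\circ\mathbf{Q}_i,\,f\circ\mathbf{Q}_i)$. Cotangent-lifting to $T^*M_i=T^*{\rm Emb}(S,\mathbb{R}^n)\times\mathcal{F}(S)\times{\rm Den}(S)$, with fibre coordinates $(\mathbf{P}_i,w_i)$ conjugate to $(\mathbf{Q}_i,\theta_i)$, the standard cotangent-lift formula $\langle\mathbf{J}_i,\xi\rangle=\langle(\mathbf{P}_i,w_i),\xi_{M_i}\rangle$ returns $\int\mathbf{P}_i\cdot\mathbf{v}(\mathbf{Q}_i)\,{\rm d}^ks+\int w_i\,f(\mathbf{Q}_i)\,{\rm d}^ks$, which is exactly the pairing of $\xi$ with the delta-supported pair $\big(\int\mathbf{P}_i\,\delta(\mathbf{x}-\mathbf{Q}_i)\,{\rm d}^ks,\ \int w_i\,\delta(\mathbf{x}-\mathbf{Q}_i)\,{\rm d}^ks\big)$. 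Summing over $i$ and using additivity of momentum maps for the product action then reproduces the singular-solution ansatz $(\ref{SDsingsoln})$.

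The decisive observation is that this formula does not involve $\theta_i$: the base variable $\theta_i$ is cyclic and its conjugate $w_i$ is therefore conserved, matching $\partial_t w_i=0$. I would make this precise by reducing $T^*M_i$ by the cyclic $\mathcal{F}(S)$-translations (Poisson reduction), obtaining the stated phase space $T^*{\rm Emb}(S,\mathbb{R}^n)\times{\rm Den}(S)$, on which $w_i$ survives as a parameter — a Casimir of the reduced bracket — while $(\mathbf{Q}_i,\mathbf{P}_i)$ retain the canonical structure; since $\mathbf{J}$ is $\theta_i$-independent it descends to this quotient. Because momentum maps arising from cotangent lifts are automatically equivariant, the descended $\mathbf{J}$ is equivariant and hence Poisson, which is ultimately what explains why $(\mathbf{Q}_i,\mathbf{P}_i,w_i)$ evolve by the canonical Hamiltonian equations displayed before the theorem.

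I expect the main obstacle to be the semidirect-product bookkeeping rather than any hard analysis: one must verify that the enlarged action is compatible with the semidirect (not direct) product law, and that the resulting equivariance reproduces the correct coadjoint action on $\mathfrak{g}^*$ — in particular the diamond coupling $\rho\diamond\lambda$ appearing in $(\ref{EPDiffDen})$ — so that $\mathbf{J}$ intertwines the canonical bracket on the domain with the semidirect Lie-Poisson bracket on $\mathfrak{X}^*\,\circledS\,{\rm Den}$. As an independent cross-check I would also compute $\{F,\langle\mathbf{J},\xi\rangle\}$ directly against $\xi_P[F]$: the variation of $\langle\mathbf{J},\xi\rangle$ with respect to $\mathbf{Q}_i$ produces $(\nabla\mathbf{v})^{\!T}\mathbf{P}_i+w_i\nabla f(\mathbf{Q}_i)$, whose second summand is precisely the semidirect coupling generated — through the $\mathbf{Q}_i$-dependence of $\delta\theta_i=f(\mathbf{Q}_i)$ — by the $\mathcal{F}$-action on the momentum fibre. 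Checking that the two computations agree confirms the defining momentum-map identity and, with it, the theorem.
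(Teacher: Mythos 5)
Your proposal is correct, and its primary route is genuinely different from the paper's. The paper proves Theorem \ref{singsolnmommap-thm} entirely on the stated phase space $T^*{\rm Emb}(S,\mathbb{R}^n)\times{\rm Den}(S)$: it postulates the canonical bracket there (with $w$ entering only as a parameter), computes $\{F,\langle{\bf J},\beta\rangle\}$ for the pairing (\ref{mommap-formula}) to obtain the vector field (\ref{HamVF}), and then recognizes that field as the infinitesimal generator of cotangent lifts of ${\rm Diff}$ composed with the fiber translations $\tau_{-{\rm d}(w\beta_0)}$ --- so your final \emph{cross-check}, producing $(\nabla\boldsymbol{\beta}_1)^T{\bf P}_i+w_i\nabla\beta_0({\bf Q}_i)$ from the ${\bf Q}$-variation, is in fact the paper's entire proof. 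Your main argument instead constructs ${\bf J}$ upstairs as the standard cotangent-lift momentum map on $T^*\big({\rm Emb}(S,\mathbb{R}^n)\times\mathcal{F}(S)\big)$ for the left action $(\varphi,a)\cdot({\bf Q},\theta)=(\varphi\circ{\bf Q},\,\theta+a\circ{\bf Q})$, and then Poisson-reduces by the cyclic $\theta$-translations. This buys you several things the paper pays for separately: equivariance comes for free from the cotangent-lift property (the paper proves infinitesimal equivariance by hand in Theorem \ref{equivar-thm}); the conservation $\dot w_i=0$ is explained structurally by cyclicity rather than observed from the equations; and you effectively anticipate the Kaluza--Klein configuration space and the principal-bundle identification $T^*{\rm Emb}\times\mathfrak{g}^*\simeq T^*B/G$ that the paper introduces only \emph{after} the theorem (the Kaluza--Klein subsection of Section \ref{sec.2} and the bundle remark in Section \ref{sec.3}), which confirms your reduction is the intended geometry. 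What the paper's route buys in exchange is economy: it never has to verify that anything descends through a quotient. That descent is the one step you should write out rather than assert: the ${\rm Diff}\,\circledS\,\mathcal{F}$-action commutes with the $\mathcal{F}(S)$-translations (true, because the shift $a\circ{\bf Q}$ is $\theta$-independent and the translations touch only $\theta$), so the action, the bracket, and the $\theta$-independent map ${\bf J}$ all pass to the quotient, where the reduced bracket is canonical in $({\bf Q},{\bf P})$ with $w$ a Casimir --- exactly the structure the paper postulates at the outset (its displayed bracket has an obvious typo, repeating $\partial F/\partial Q^j\,\partial G/\partial P_j$ in both slots, but the intended antisymmetric form is clear).
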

\begin{proof}
For convenience, we fix label $i$ and suppress the summations in our singular solution ansatz. In order to define a momentum map we first need to establish a Poisson structure on $T^*{\rm Emb}\,\times\,{\rm Den}$. Since the weights have no temporal evolution, it is reasonable to propose the canonical Poisson bracket on the new phase space, so that
\[
\{F,G\}({\bf Q},{\bf P}, w)\,:=\sum_{j=1}^n
\int
\left(\pp{F}{Q^j}\pp{G}{P_j}-\pp{F}{Q^j}\pp{G}{P_j}\right)
\,{\rm d}^ks
\,.
\]
Now, if $\beta=(\boldsymbol{\beta}_1,\beta_0)\in\mathfrak{X}(\mathbb{R}^n)\ \text{\large $\circledS$}\ \mathcal{F}(\mathbb{R}^n)$, then the pairing $\langle{\bf J},\beta\rangle$ is naturally written as
\begin{equation}
\Big\langle{\bf J}({\bf Q},{\bf P}, w),\,\beta \Big\rangle
=\int
\Big( {\bf P}(s)\cdot\boldsymbol{\beta}_1({\bf Q}(s))
+
w(s)\,\beta_0({\bf Q}(s))
\Big)
\,{\rm d}^ks
\,.
\label{mommap-formula}
\end{equation}
Consequently, one calculates
\begin{align}
\big\{
F,\langle{\bf P},\boldsymbol{\beta}_1({\bf Q})\rangle
\big\}
+
\big\{
F,\langle w,\,\beta_0({\bf Q})\rangle
\big\}
\,=\!
\int\!\left(
\frac{\delta F}{\delta \bf Q}\cdot\boldsymbol{\beta}_1({\bf Q}(s))
-
\left(
\frac{d \boldsymbol{\beta}_1}{d{\bf Q}}^T\cdot\, {\bf P}
+\,w\,\frac{d \beta_0}{d {\bf Q}}
\right)\cdot
\frac{\delta F}{\delta {\bf P}}
\right)
\,{\rm d}^ks
\,.
\label{HamVecFld2}
\end{align}
This may be written equivalently as
\[
\big\{
F, \big\langle{\bf J},\,\beta \big\rangle
\big\}
=X_\beta[F]
\,,
\]
in which the vector field $X_\beta$ has $({\bf Q},{\bf P}, w)$ components
\begin{equation}\label{HamVF}
X_\beta:=
\bigg(\boldsymbol{\beta}_1({\bf Q})\,,
-\,
\left(
\frac{d \boldsymbol{\beta}_1}{d{\bf Q}}^T\cdot\, {\bf P}
+\,w\,\frac{d \beta_0}{d {\bf Q}}
\right)
,\,
0\bigg)
\,.
\end{equation}
This vector field is identified with a Hamiltonian vector field corresponding to the Hamiltonian
\[
H
=\int \Big( w(s)\ \beta_0({\bf Q}(s))\ +\ {\bf P}(s)\cdot
\boldsymbol{\beta}_1({\bf Q}(s))\Big)\, {\rm d}^ks
= \Big\langle{\bf J},\beta \Big\rangle
\,.
\]
This Hamiltonian corresponds to compositions of cotangent lifts $T^*$Diff generated by $\boldsymbol{\beta}_1$ with fiber translations
\[
\text{\large$\tau$}_{\!-{\rm d}(w\,\beta_0)}\cdot (q,p) := (q,p-w\,{\rm d} \beta_0)
\]
generated by $-w\beta_0$ (notice that $w$ is independent of $q$, so that ${\rm d}(w\,\beta_0)=w\,{\rm d}\beta_0)$. Thus $X_\beta$
is an infinitesimal generator.
\end{proof}

\begin{remark}
Theorem \ref{singsolnmommap-thm} proves that the singular solution is a momentum map deriving from the infinitesimal action of ${\rm Diff}\,\circledS\,\mathcal{F}$. However, it does not identify the corresponding {\bfi global group action}. This is explained in the next section below.
\end{remark}

\subsection{A left group action for the singular solution momentum map}
The proof of Theorem \ref{singsolnmommap-thm} shows that the momentum map $\bf J$ in equation (\ref{mommap-formula}) is obtained by the following left action of the semidirect-product group ${\rm Diff}\,\circledS\,{\mathcal{F}}$
\begin{align*}
\Big({\bf Q}^{\,(t)} ,\,{\bf P}^{\,(t)},\, w^{\,(t)}\Big)&=
\left(\eta_{t}\big({\bf Q}^{\,(0)}\big) ,\, {\bf P}^{\,(0)}\cdot^{\,} T_{\,}\eta_t^{-1}\big({\bf Q}^{\,(0)}\big)-\textrm{\large d}\big(w\,\beta_0\big({\bf Q}^{\,(0)}\big)\big)
,\, w^{\,(0)}
\right)
\\
&=\left(\eta_{t}\big({\bf Q}^{\,(0)}\big) ,\, {\bf P}^{\,(0)}\cdot\nabla_{\,}\eta_t^{-1}\big({\bf Q}^{\,(0)}\big)-\nabla\big(w\,\beta_0\big({\bf Q}^{\,(0)}\big)\big)
,\, w^{\,(0)}
\right)
\\
&=\text{\large$\tau$}_{\!-\,{\rm d}(w\,\beta_0)\!}\circ \eta_t\ \Big({\bf Q}^{\,(0)},\, {\bf P}^{\,(0)},\, w^{\,(0)}\Big)
\,,
\end{align*}
\rem{ 
\comment{
CT: The cotangent lift action $T^*\eta(q,p)=(\eta(q),p\cdot T\eta^{-1}(q))$ is identical to that used in the plasma paper MaWeRaScSp83.

\bigskip
This notation is ambiguous. My definition in eqn (B.5.2) of GeomMech Part 2 page 240 has clearer notation. However, it would introduce double subscripts, so let's stay with this unless you think of something better. }
} 
where $(\eta_{t\,},\beta_0)\in\,{\rm Diff}\,\circledS\,\mathcal{F}$. It is
worth noticing that the order of operations in the composition is not relevant, since
\[
\eta^{-1}_t\circ\text{\large$\tau$}_{\!w\,{\rm d}\beta_0\!}\ \Big({\bf Q}^{\,(0)},\, {\bf P}^{\,(0)},\, w^{\,(0)}\Big)
=
(\eta_t,\beta_0)^{-1}\ \Big({\bf Q}^{\,(0)},\, {\bf P}^{\,(0)},\, w^{\,(0)}\Big)
\,,
\]
so that $(\eta^{-1},\beta_0)=(\eta,-\beta_0)^{\,-1}\in\,{\rm Diff}\,\circledS\,\mathcal{F}$. This can be easily seen from the following calculation, where we take $w=1$ for simplicity,
\begin{align*}
(\eta,\beta_0)^{-1}\circ\,(\eta,\beta_0)\cdot\left({\bf Q} ,\,{\bf P}\right)
&=
(\eta^{-1},-\beta_0)\cdot
\left(\eta_{t}({\bf Q}) ,\, {\bf P}\cdot^{\,} T_{\,}\eta_t^{-1}({\bf Q})+\textrm{d}\beta_0({\bf Q})
\right)
\\&
=
T^*\eta^{-1}\cdot
\left(\eta_{t}({\bf Q}) ,\, {\bf P}\cdot^{\,} T_{\,}\eta_t^{-1}({\bf Q})
\right)
\\&
=
T^*\eta^{-1}\circ\,T^*\eta\cdot({\bf Q,P}) =({\bf Q,P})
\,.
\end{align*}
Consequently, exchanging the order of operations in the composition simply yields another element of ${\rm Diff}\,\circledS\,\mathcal{F}$ and the arguments above are still valid.
Such group operations are useful in deriving fluid descriptions from kinetic equations, where the so called \emph{plasma-to-fluid momentum map} determines the Hamiltonian structure of the fluid system \cite{MaWeRaScSp}.

\begin{remark}[Geodesic motion on fiber-preserving transformations]
The arguments \, \,  above show that $EP\!\left({\rm Diff}_{\,}\circledS_{\,}\mathcal{F}\right)$ is a geodesic motion on the Lie group ${\rm Symp}_{\pi}(T^*{\rm Emb})$ of {\bfi fiber-preserving symplectomorphisms} on the cotangent bundle $T^*{\rm
Emb}$. It is a general result that any transformation by  ${\rm Symp}_{\pi}(T^*{\rm Emb})$ can be realized as the composition of a fiber translation and a cotangent lift \cite{BaWe97}. On the other
hand, such a transformation is always a canonical transformation characterized by a generating function that is \emph{linear} in the canonical momentum. Therefore, $EP\!\left({\rm Diff}_{\,}\circledS_{\,}\mathcal{F}\right)\simeq EP_{\ }{\rm Symp}_{\pi\,}$. 
Following this line of reasoning, one may also consider geodesic flows on the entire symplectic group ${\rm Symp}\left(T^*{\rm Emb})\right.$. These flows apply to the moment hierarchy in kinetic theory and have been treated in \cite{GiHoTr05,GiHoTr07}.
\end{remark}

Another important property of the momentum map $\bf J$ in (\ref{mommap-formula}) is its equivariance \cite{MaRa99}, which
guarantees that $\bf J$ is also a Poisson map. Below we show the infinitesimal equivariance of $\bf J$, which is defined by the following relation
\[
X_\beta[\big\langle{\bf J},\gamma \big\rangle]
= \big\langle {\bf J},{\rm ad}_\beta\, \gamma \big\rangle
\quad\forall\,\beta,\gamma\in\mathfrak{X}\,\circledS\,\mathcal{F}
\,.
\]
\begin{theorem}[Equivariance]
\label{equivar-thm}
The singular solution momentum map $\bf J$ in (\ref{mommap-formula})
\[
{\bf J}({\bf Q,P},w)\,=\!\int\!\big({\bf P}(s,t),w(s)\big)\,\delta\!\left({\bf
x-Q}(s,t)\right)\,{\rm d}^ks
\]
is infinitesimally equivariant.
\end{theorem}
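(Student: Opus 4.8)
The plan is to verify the defining relation $X_\beta[\langle\mathbf{J},\gamma\rangle]=\langle\mathbf{J},{\rm ad}_\beta\gamma\rangle$ by a direct computation, playing the explicit Hamiltonian vector field (\ref{HamVF}) against the pairing formula (\ref{mommap-formula}). Fix two arbitrary Lie algebra elements $\beta=(\boldsymbol{\beta}_1,\beta_0)$ and $\gamma=(\boldsymbol{\gamma}_1,\gamma_0)$ in $\mathfrak{X}\,\circledS\,\mathcal{F}$ and, as in the previous proof, suppress the label $i$ and the sum. The left-hand side is the derivative of the scalar function $(\mathbf{Q},\mathbf{P},w)\mapsto\langle\mathbf{J},\gamma\rangle$ along $X_\beta$, so the first task is to record the functional derivatives of $\langle\mathbf{J},\gamma\rangle=\int(\mathbf{P}\cdot\boldsymbol{\gamma}_1(\mathbf{Q})+w\,\gamma_0(\mathbf{Q}))\,{\rm d}^ks$, namely $\delta\langle\mathbf{J},\gamma\rangle/\delta\mathbf{P}=\boldsymbol{\gamma}_1(\mathbf{Q})$ and $\delta\langle\mathbf{J},\gamma\rangle/\delta\mathbf{Q}=(d\boldsymbol{\gamma}_1/d\mathbf{Q})^{T}\!\cdot\mathbf{P}+w\,d\gamma_0/d\mathbf{Q}$. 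The $w$-slot contributes nothing because $(X_\beta)_w=0$.

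Next I would contract these derivatives with the $\mathbf{Q}$- and $\mathbf{P}$-components of $X_\beta$ read off from (\ref{HamVF}), obtaining
\[
X_\beta[\langle\mathbf{J},\gamma\rangle]=\int\!\Big[\big((d\boldsymbol{\gamma}_1/d\mathbf{Q})^{T}\!\cdot\mathbf{P}+w\,d\gamma_0/d\mathbf{Q}\big)\cdot\boldsymbol{\beta}_1-\boldsymbol{\gamma}_1\cdot\big((d\boldsymbol{\beta}_1/d\mathbf{Q})^{T}\!\cdot\mathbf{P}+w\,d\beta_0/d\mathbf{Q}\big)\Big]{\rm d}^ks\,.
\]
The key step is then to regroup this integrand into two pieces. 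The terms carrying $\mathbf{P}$ antisymmetrize into $\mathbf{P}\cdot(\boldsymbol{\beta}_1\!\cdot\!\nabla\boldsymbol{\gamma}_1-\boldsymbol{\gamma}_1\!\cdot\!\nabla\boldsymbol{\beta}_1)=\mathbf{P}\cdot[\boldsymbol{\beta}_1,\boldsymbol{\gamma}_1]$, the Jacobi--Lie bracket of vector fields, while the terms carrying $w$ collect into $w\,(\boldsymbol{\beta}_1\!\cdot\!\nabla\gamma_0-\boldsymbol{\gamma}_1\!\cdot\!\nabla\beta_0)$, all quantities being evaluated at $\mathbf{Q}(s)$.

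Finally I would recognize the two brackets as precisely the components of the adjoint (Lie bracket) action on the semidirect-product Lie algebra, ${\rm ad}_\beta\gamma=[\beta,\gamma]=\big([\boldsymbol{\beta}_1,\boldsymbol{\gamma}_1]\,,\,\boldsymbol{\beta}_1\!\cdot\!\nabla\gamma_0-\boldsymbol{\gamma}_1\!\cdot\!\nabla\beta_0\big)$, where the second slot is the infinitesimal action of $\mathfrak{X}$ on $\mathcal{F}$ by directional (Lie) derivative. Substituting into (\ref{mommap-formula}) with $\gamma$ replaced by ${\rm ad}_\beta\gamma$ reproduces exactly the integral above, which establishes infinitesimal equivariance. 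The main obstacle is bookkeeping rather than conceptual: one must pin down the sign conventions relating ${\rm ad}$, the Jacobi--Lie bracket, and the $\mathfrak{X}$-action on $\mathcal{F}$ so that the antisymmetrized derivative terms assemble into the bracket with the correct sign, and one must handle the chain-rule factors $d\boldsymbol{\gamma}_1/d\mathbf{Q}$ arising from evaluating the test functions along the embedding $\mathbf{Q}(s)$ consistently in the transpose and index conventions used in (\ref{HamVF}).
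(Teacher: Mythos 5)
Your proposal is correct and follows essentially the same route as the paper's own proof: you apply the Hamiltonian vector field $X_\beta$ from (\ref{HamVF}) to $\langle\mathbf{J},\gamma\rangle$, antisymmetrize the $\mathbf{P}$-terms into the Jacobi--Lie bracket $[\boldsymbol{\beta}_1,\boldsymbol{\gamma}_1]$ and the $w$-terms into $\boldsymbol{\beta}_1\!\cdot\!\nabla\gamma_0-\boldsymbol{\gamma}_1\!\cdot\!\nabla\beta_0$, and identify the result with $\langle\mathbf{J},{\rm ad}_\beta\gamma\rangle$ via the pairing (\ref{mommap-formula}), exactly as the paper does (the paper merely makes the final identification explicit by re-expressing the $s$-integrals as integrals over $\mathbb{R}^n$ against the delta functions). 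Your closing remark about pinning down the sign conventions for ${\rm ad}$ and the $\mathfrak{X}$-action on $\mathcal{F}$ is the right caution, and your stated conventions match those the paper uses.
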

\begin{proof}
\begin{align*}
X_\beta[\langle{\bf J},\gamma\rangle]
=&
X_\beta[\langle w,\gamma_0\rangle+\langle {\bf P},\boldsymbol{\gamma}_1\rangle]
\\
=&\int\left(
\left(w\,\frac{d \gamma_0}{d {\bf Q}}+{\bf P}\,\frac{d \boldsymbol{\gamma}_1}{d
{\bf Q}}\right)\cdot\boldsymbol{\beta}_1({\bf Q}(s))
-
\left(w\,\frac{d \beta_0}{d {\bf Q}}+{\bf P}\,\frac{d \boldsymbol{\beta}_1}{d
{\bf Q}}\right)\cdot
\boldsymbol{\gamma}_1
\right){\rm d}^ks
\\
=&\int
{\bf P}(s)\cdot\left(
\boldsymbol{\beta}_1\cdot\frac{d\boldsymbol{\gamma}_1}{d {\bf Q}}
-
\boldsymbol{\gamma}_1\cdot\frac{d \boldsymbol{\beta}_1}{d
{\bf Q}}\right){\rm d}^ks
\,+\!
\int\!
w(s)\left(
\boldsymbol{\beta}_1\cdot\frac{d\gamma_0}{d {\bf Q}}
-
\boldsymbol{\gamma}_1\cdot\frac{d\beta_0}{d
{\bf Q}}\right){\rm d}^ks
\\
=&
\int\left(\int
{\bf P}(s,t)\,\delta({\bf x-Q}(s,t))\,{\rm d}^ks\right)\cdot\bigg[
(\boldsymbol{\beta}_1\cdot\nabla)\boldsymbol{\gamma}_1
-
(\boldsymbol{\gamma}_1\cdot\nabla)\boldsymbol{\beta}_1
\bigg]\,{\rm d}^n{\bf x}
\\
&\,+\!
\int\left(
\int w(s)\,\delta({\bf x-Q}(s,t))\,{\rm d}^ks\right)\bigg[
(\boldsymbol{\beta}_1\cdot\nabla)\gamma_0
-
(\boldsymbol{\gamma}_1\cdot\nabla)\beta_0\bigg]\,{\rm d}^n {\bf x}
\\
=&\ \langle {\bf J},{\rm ad}_\beta\, \gamma\rangle
\,.
\end{align*}
\end{proof}


\subsection{Collective Hamiltonian}
From the expression of the vector field (\ref{HamVF}), we can immediately write the equations of motion for $\bf Q$ and $\bf P$. Moreover, if we insert the singular solution momentum map into the Hamiltonian (\ref{EPGosH-Ham}) we recover the following collective Hamiltonian $H_N: \times_{i=1}^N\left(T^*{\rm Emb}(S,\mathbb{R}^n)\times{\rm Den}(S)\right)\to\mathbb{R}$
\begin{multline*}
H_N\, =\, \frac12\sum_{i,j}^N\iint{\bf P}_i(s,t)\cdot{\bf P}_j(s',t)\ G_1\!\left({\bf Q}_{i}(s,t)-{\bf Q}_j(s',t)\right)\,{\rm d}^ks\,{\rm d}^ks'
\\
+
\frac12\sum_{i,j}^N\iint w_i(s)\,w_j(s')\, G_2\!\left({\bf Q}_i(s,t)-{\bf Q}_j(s',t)\right)\,{\rm d}^ks\,{\rm d}^ks'
\,.
\end{multline*}
The corresponding canonical Hamiltonian equations of motion are
\begin{align*}
\pp{{\bf Q}_i}{t}=&\sum_j\int {\bf P}_j(s',t)\ G_1\!\left({\bf Q}_i(s,t)-{\bf Q}_j(s',t)\right)\,{\rm d}^ks'
\,,\\
\pp{{\bf P}_i}{t}=&-
\,\sum_j\int \left({\bf P}_i(s,t)\cdot{\bf P}_j(s',t)\right) \pp{}{{\bf
Q}_i}G_1\!\left({\bf Q}_i(s,t)-{\bf Q}_j(s',t)\right)\,{\rm d}^ks'
\\
&-
\sum_j\int w_i(s,t)\,w_j(s',t)\, \pp{}{{\bf
Q}_i}G_2\!\left({\bf Q}_i(s,t)-{\bf Q}_j(s',t)\right)\,{\rm d}^ks'
\,.
\end{align*}

\begin{remark}[Singular potential terms and integrable cases]
In the limiting case $G_2=\delta$, we recover the integrable CH2 system corresponding to a geodesic motion with a delta-like potential. In terms of particle dynamics, this means that for a positive potential, two particles will bounce off immediately before colliding; while for a negative potential, they will proceed together, attached one to the other and they will never split apart. This kind of singular delta-like potential is also present in another integrable system, the Benney equation \cite{Be1973,Gi1981}.
\end{remark}

\subsection{Kaluza-Klein formulation}
It is interesting that the collective Hamiltonian allows for a Kaluza-Klein formulation similar to the usual treatment of a particle in a magnetic field \cite{MaRa99}. In order to see this, one first observes that the equations of motion for $({\bf Q}_i,{\bf P}_i,w_i)$ may be recovered on the Lagrangian side via the Legendre transform,
\[
{\bf V}^i=G_1^{\,ij}{\bf P}_j
\,,\qquad\qquad
\dot{\theta}^{\,i}=G_2^{\,ij}w_j
\,.
\]
This yields the following Lagrangian $L:\times_{i=1}^N\left(T{\rm Emb}(S,\mathbb{R}^n)\times\mathcal{F}(S)\right)\to\mathbb{R}$
\begin{multline*}
L_N\, =\, \frac12\sum_{i,j}^N\iint{\bf V}^i(s,t)\cdot{\bf V}^j(s',t)\ G^1\!\left({\bf Q}_{i}(s,t)-{\bf Q}_j(s',t)\right)\,{\rm d}^ks\,{\rm d}^ks'
\\
+
\frac12\sum_{i,j}^N\iint \dot{\theta}^i(s)\,\dot{\theta}^j(s')\, G^2\!\left({\bf Q}_i(s,t)-{\bf Q}_j(s',t)\right)\,{\rm d}^ks\,{\rm d}^ks'
\end{multline*}
where $G^i$ with raised index is the inverse metric associated to $G_i$. (If $G_i$ is given by a convolution kernel, then $G^i$ becomes a differential operator.) As our notation may suggest, we now enlarge our configuration space so that the Lagrangian $L_N$ becomes defined on
\[
TQ_{KK}:=T\left({\rm Emb}(S,\mathbb{R}^n)\times\mathcal{F}(S)\right)
=
T{\rm Emb}(S,\mathbb{R}^n)\times T\mathcal{F}(S)
\]
and $Q_{KK}$ is called {\bfi Kaluza-Klein configuration space}. Now, since
the coordinates $\theta_i$ in the Lagrangian $L_N=L({\bf Q}^i,{\bf V}^i,\theta^i,\dot{\theta}^i)$ are ignorable, its conjugate momenta $w_i$ will be constants of motion.
This allows the collective Hamiltonian to be naturally written in the Kaluza-Klein formulation on $T^*Q_{KK}$.  In this framework, it is well known \cite{RaTuSbSoTe05} that the weights $w_i$ are another type of conserved momentum map.
\begin{remark}[Physical interpretation]
The physical system described by the Kaluza-Klein Hamiltonian
$H_N$ turns out to be related to the motion of electrical charges whose mutual interaction is given by the potential term in $G_2$. This relation is evident by noticing that for the case
of a single particle, the Hamiltonian $H_N$ reduces to the Kaluza-Klein Hamiltonian
$H_1={\bf P}^2/2+w^2/2$ of a free charge \cite{MaRa99}. In the multi-particle
case, the momentum of a single charge is affected not only by the momenta
of the remaining particles in the system, but also by their charges involved
in the potential term.

The absence of an external magnetic field in the system is reflected the absence of its vector potential $\bf A$ in the Hamiltonian $H_N$. One may imagine inserting a non-zero magnetic field into the system by substituting
$\,{\bf P}_i(s)\to {\bf P}_i(s) - w_i(s){\bf A}({\bf Q}_i(s))$ into the Hamiltonian $H_N$. This is the usual process in electromagnetic (or Yang-Mills) theory.
\end{remark}

\subsection{A right action momentum map and the Kelvin-Noether theorem} As shown in the previous section, the singular solutions identify a momentum map which is determined
by a left action of the  group Diff$(\mathbb{R}^n)$  on ${\bf Q}\in{\rm Emb}(S,\mathbb{R}^n)$,
i.e. $\eta{\bf Q}=\eta\circ{\bf Q}$. However, as in the case of EPDiff, one may also construct a right action by ${\bf Q}\eta={\bf Q}\circ\eta$, which is defined through the group Diff($S$), rather than Diff($\mathbb{R}^n$).

In order to perform such a construction, we consider the Kaluza-Klein formulation from the previous section, so that the configuration space is now $Q_{KK}={\rm Emb}(S,\mathbb{R}^n)\times{\rm Den}(S)$. The group ${\rm Diff}(S)$ acts on $Q_{KK}$ from the right according to
\begin{align*}
\Big({\bf Q}^{\,(0)} ,\,\theta^{(0)}\Big)\eta_t\,&=
\left({\bf Q}^{\,(0)}\circ\eta_{t\,} ,\,\theta\circ\eta_t
\right)
\,.
\end{align*}
The cotangent lift of this action to $T^*Q_{KK}$ yields the
following.

\begin{theorem}
\label{rightactionmommap-thm}
The map
\begin{align}\label{right-action-mommap}
{\bf J}_{\rm S}({\bf Q,P},\theta,w)
\,
=&\ {\bf P}(s)\cdot{\rm d}{\bf Q}(s)+w(s){\rm d}\theta(s)
\end{align}
is a momentum map
\[
{\bf J}_{\rm S\,}:_{\,}T^{*\!}\big({\rm Emb}(S,\mathbb{R}^n)\times\mathcal{F}(S)\big)\to\,\mathfrak{X}^*(S)
\]
corresponding to the cotangent lift of the right action of ${\rm Diff}(S)$ on ${\rm Emb}(S,\mathbb{R}^n)\times\mathcal{F}(S)$. This quantity is preserved
by the flow generated by the Hamiltonian $H_N$.
\end{theorem}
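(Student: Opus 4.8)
The plan is to recognize $\mathbf{J}_{\rm S}$ as an instance of the standard cotangent-lift momentum map, and then to deduce its conservation from Noether's theorem, exactly mirroring the right-action construction already noted for EPDiff (where the momentum map is the canonical one-form $\mathbf{P}\cdot d\mathbf{Q}$ on $T^*{\rm Emb}$). First I would recall the general fact \cite{MaRa99} that for any Lie group $K$ acting on a manifold $Q$, the cotangent lift of the action to $T^*Q$ carries an equivariant momentum map $\mathbf{J}:T^*Q\to\mathfrak{k}^*$ determined by the pairing $\langle\mathbf{J}(\alpha_q),\xi\rangle=\langle\alpha_q,\xi_Q(q)\rangle$, for $\alpha_q\in T^*_qQ$ and $\xi\in\mathfrak{k}$, where $\xi_Q$ is the infinitesimal generator. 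Here $K={\rm Diff}(S)$, so that $\mathfrak{k}=\mathfrak{X}(S)$ and $\mathfrak{k}^*=\mathfrak{X}^*(S)$, while $Q=Q_{KK}={\rm Emb}(S,\mathbb{R}^n)\times\mathcal{F}(S)$ and $\alpha_q=(\mathbf{P},w)$.

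The second step is to compute the infinitesimal generator of the right action. Writing $\eta_t$ for the flow of $\xi\in\mathfrak{X}(S)$ and differentiating $(\mathbf{Q}\circ\eta_t,\theta\circ\eta_t)$ at $t=0$ gives $\xi_{Q_{KK}}(\mathbf{Q},\theta)=(d\mathbf{Q}\cdot\xi,\,d\theta\cdot\xi)$. Substituting this into the pairing formula and using the $L^2$ pairing on $S$ yields
\[
\big\langle\mathbf{J}_{\rm S}(\mathbf{Q},\mathbf{P},\theta,w),\,\xi\big\rangle
=\int_S\Big(\mathbf{P}\cdot(d\mathbf{Q}\cdot\xi)+w\,(d\theta\cdot\xi)\Big)\,{\rm d}^ks,
\]
so that collecting the contraction against $\xi$ gives precisely $\mathbf{J}_{\rm S}=\mathbf{P}\cdot d\mathbf{Q}+w\,d\theta\in\mathfrak{X}^*(S)$. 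This establishes the first assertion; equivariance is automatic for cotangent lifts, so $\mathbf{J}_{\rm S}$ is in addition a Poisson map.

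For the conservation statement I would invoke Noether's theorem rather than computing the bracket $\{H_N,\langle\mathbf{J}_{\rm S},\xi\rangle\}$ directly: it suffices to show that the collective Hamiltonian $H_N$ is invariant under the right ${\rm Diff}(S)$-action. The key observation is that reparametrizing the embedding label by $s=\eta(\sigma)$ sends $\mathbf{Q}_i(s)\mapsto(\mathbf{Q}_i\circ\eta)(\sigma)$, while the conjugate momentum-density and the weight transform by pushforward, $\mathbf{P}_i(s)\,{\rm d}^ks\mapsto\mathbf{P}_i(\eta(\sigma))\,|\det D\eta(\sigma)|\,{\rm d}^k\sigma$ and $w_i(s)\,{\rm d}^ks\mapsto w_i(\eta(\sigma))\,|\det D\eta(\sigma)|\,{\rm d}^k\sigma$. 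Substituting this change of variables into each double integral in $H_N$, the Jacobian factors combine with the transformed integrands so that $H_N$ is returned in exactly the same form with $(\mathbf{Q}_i,\mathbf{P}_i,w_i)$ replaced by their transforms; the Green's functions $G_1,G_2$ are unaffected because they depend only on the values $\mathbf{Q}_i(s)-\mathbf{Q}_j(s')$. Hence $H_N$ is ${\rm Diff}(S)$-invariant, and Noether's theorem gives $\frac{d}{dt}\langle\mathbf{J}_{\rm S},\xi\rangle=\{\langle\mathbf{J}_{\rm S},\xi\rangle,H_N\}=-\,\xi_{T^*Q_{KK}}[H_N]=0$ for every $\xi\in\mathfrak{X}(S)$, proving that $\mathbf{J}_{\rm S}$ is preserved.

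The step I expect to be the main obstacle is the invariance of $H_N$: one must verify carefully that $\mathbf{P}_i$ and $w_i$ really transform as densities on $S$ under the cotangent lift, so that their pairings against the reparametrized embeddings acquire the correct Jacobian weighting, and one must check that $H_N$---which is written without any reference to the cyclic coordinate $\theta$---is genuinely a function on $T^*Q_{KK}$ that is constant along the group orbits. Once these density weights are tracked correctly, the change-of-variables computation is routine, and the remainder of the argument is the standard Noether deduction.
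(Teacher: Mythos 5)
Your proof is correct, and it takes precisely the alternative route that the paper mentions and then deliberately declines: the authors' proof opens by noting that one ``might proceed analogously to \cite{HoMa2004} by applying the formula for momentum maps arising from cotangent lifts'' --- which is exactly your first two steps, $\langle \mathbf{J}_{\rm S}(\alpha_q),\xi\rangle=\langle\alpha_q,\xi_{Q_{KK}}(q)\rangle$ with infinitesimal generator $\xi_{Q_{KK}}(\mathbf{Q},\theta)=({\rm d}\mathbf{Q}\cdot\xi,\,{\rm d}\theta\cdot\xi)$ --- but they instead verify the defining momentum map property directly, inserting $\langle \mathbf{J}_{\rm S},\beta_1\rangle=\int(\mathbf{P}\cdot\nabla_{\!s}\mathbf{Q}+w\,\nabla_{\!s}\theta)\cdot\beta_1\,{\rm d}^ks$ into the canonical Poisson bracket and reading off that $\{F,\langle\mathbf{J}_{\rm S},\beta_1\rangle\}=\xi_{\beta_1}[F]$. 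The trade-off is this: your route is shorter and yields equivariance (hence the Poisson property) for free from the general theory of cotangent lifts, whereas the paper's bracket computation explicitly exhibits the infinitesimal right action on each variable $(\mathbf{Q},\mathbf{P},\theta,w)$ separately --- in particular the density transport law $\dot w+\nabla_{\!s}\cdot(w\beta_1)=0$, which is then invoked verbatim in the Kelvin--Noether Corollary \ref{cor.abelianKNthm}, so the authors' choice pays off downstream. On the conservation claim the two arguments coincide: the paper compresses it into one sentence (invariance of $H_N$ under the cotangent-lifted right action, ``which amounts to invariance of the integral over $S$ under reparametrization using the change of variables formula''), which is exactly your Noether argument; your explicit bookkeeping --- each of $\mathbf{P}_i$ and $w_i$ carrying one Jacobian factor that cancels against the Jacobian of the substitution in the corresponding integration variable, with $G_1,G_2$ untouched since they depend only on the point values $\mathbf{Q}_i(s)-\mathbf{Q}_j(s')$ --- fills in the detail the paper leaves implicit, and the obstacle you flagged (that $\mathbf{P}_i$ and $w_i$ must genuinely transform as densities on $S$) is real but resolves exactly as you computed.
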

\begin{proof}
Although we might proceed analogously to \cite{HoMa2004} by applying the formula for momentum maps arising from cotangent lifts, here we prefer to proceed by applying the general definition, since it exhibits the infinitesimal right action on each of the variables. Inserting the functional
\[
\Big\langle {\bf J}_{\rm S},\,\beta_1 \Big\rangle
=\int\!\big({\bf P}\cdot\nabla_{\!s}{\bf Q}\,+\,
w\,\nabla_{\!s}\theta\big)\cdot\beta_1(s)\, {\rm d}^k s
\]
into the Poisson bracket $\{F,\left\langle {\bf J},\beta_1\right\rangle\}$
yields
\begin{multline*}
\{F,\left\langle {\bf J}_{\rm S},\beta_1\right\rangle\}
=
\int\!\left(
{\beta}_1(s)\cdot
\nabla_{\!s}{\bf Q}\,
\cdot
\frac{\delta F}{\delta {\bf Q}}
\, +\,
\nabla_{\!s}\cdot(\beta_{1\,}{\bf P})
\cdot
\frac{\delta F}{\delta \bf P}
\right)
{\rm d}^ks
\\
+
\int\!\left(
{\beta}_1(s)\cdot
\nabla_{\!s}\theta
\
\frac{\delta F}{\delta \theta}
\, +\,
\nabla_{\!s}\cdot(w\,\beta_1)
\,
\frac{\delta F}{\delta w}
\right)
{\rm d}^ks
=\xi_{\beta_1}[F]
\end{multline*}
where $\xi_{\beta_1}$ is the infinitesimal generator corresponding
to cotangent lifts generated by $\beta_1\in\mathfrak{X}(S)$.

The last part of the statement follows from the fact that the Hamiltonian $H_N$ is invariant under the cotangent lift of the right action of Diff(S), which amounts to invariance of the integral over S under reparametrization using the change of variables formula.
\end{proof}

\begin{corollary}[Kelvin-Noether theorem]$\quad$\\
\label{cor.abelianKNthm}
Exactly as in the case of EPDiff, the conservation of the momentum map in Theorem \ref{rightactionmommap-thm} leads
naturally to the Kelvin-Noether theorem. Indeed, upon recalling the evolution of $w$, i.e. $\dot{w}+\nabla_{\!s}\cdot\left(w\,\beta_1\right)=0$,  one checks that
\[
\left(\frac{\partial}{\partial t}+\pounds_{\beta_1}\right)
\Big(
w^{-1}(s)\
{\bf P}(s)\cdot{\rm d}{\bf Q}(s)+{\rm d} \theta(s)
\Big)=0
\,,
\]
and therefore
\[
\frac{d}{dt}\oint_{\gamma_t\!} \big(w^{-1}(s)\
{\bf P}(s)\cdot{\rm d}{\bf Q}(s)+{\rm d}\theta\big)
=
\frac{d}{dt}\oint_{\gamma_t} w^{-1}(s)\
{\bf P}(s)\cdot{\rm d}{\bf Q}(s)
=
0
\,,
\]
where $\gamma_t$ is any Lagrangian circuit and the first step is justified by Stokes' theorem.
\end{corollary}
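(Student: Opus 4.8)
The plan is to convert the conservation of the right-action momentum map $\mathbf{J}_{\rm S}$ (Theorem \ref{rightactionmommap-thm}) into a Lie-derivative transport law and then read off the circulation theorem via the Kelvin transport formula. The skeleton has three ingredients: (i) an advective form of the momentum conservation, $(\partial_t+\pounds_{\beta_1})\mathbf{J}_{\rm S}=0$; (ii) the continuity equation for the density weight, $(\partial_t+\pounds_{\beta_1})\,w=0$, which is precisely the given evolution $\dot w+\nabla_{\!s}\!\cdot(w\,\beta_1)=0$ read as the Lie derivative of a density on $S$; and (iii) the elementary fact that dividing a one-form density by an advected density yields an advected one-form. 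Here $\beta_1\in\mathfrak{X}(S)$ is the Eulerian (reconstruction) velocity of the reparametrization flow and $\gamma_t$ is the material loop carried along by $\beta_1$.

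First I would fix the tensorial weights. The momentum map $\mathbf{J}_{\rm S}=\mathbf{P}\cdot{\rm d}\mathbf{Q}+w\,{\rm d}\theta$ is a one-form density on $S$, while $w$ is a density, so that $w^{-1}\mathbf{J}_{\rm S}=w^{-1}\mathbf{P}\cdot{\rm d}\mathbf{Q}+{\rm d}\theta$ is a genuine weight-zero one-form, which is the object whose circulation we integrate. Using the Leibniz rule for the Lie derivative of tensor densities together with $\pounds_{\beta_1}w^{-1}=-\,w^{-2}\pounds_{\beta_1}w$, one obtains
\[
(\partial_t+\pounds_{\beta_1})\big(w^{-1}\mathbf{J}_{\rm S}\big)
=w^{-1}(\partial_t+\pounds_{\beta_1})\mathbf{J}_{\rm S}
-w^{-2}\big[(\partial_t+\pounds_{\beta_1})w\big]\,\mathbf{J}_{\rm S}\,,
\]
so that ingredients (i) and (ii) make both terms vanish. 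This is exactly the first displayed identity of the Corollary.

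The main work is establishing (i). Since $\mathbf{J}_{\rm S}$ is the momentum map of the right ${\rm Diff}(S)$ symmetry and $H_N$ is ${\rm Diff}(S)$-invariant under reparametrization of $S$, the momentum map is conserved; in the spatial (Eulerian) representation on $\mathfrak{X}^*(S)$ relevant to a loop advected along $S$, this conservation takes the coadjoint-motion form $\partial_t\mathbf{J}_{\rm S}+{\rm ad}^*_{\beta_1}\mathbf{J}_{\rm S}=0$, and for the diffeomorphism group ${\rm ad}^*_{\beta_1}$ acts as $\pounds_{\beta_1}$ on one-form densities. I would justify this either abstractly, as the instance of the Kelvin--Noether theorem in which the symmetry is exact so that the right-hand-side $\diamond$-forcing is absent, or concretely, by substituting the equations of motion for $(\mathbf{Q},\mathbf{P},\theta,w)$ generated by $H_N$ into $\partial_t\mathbf{J}_{\rm S}$ and matching against $-\pounds_{\beta_1}\mathbf{J}_{\rm S}$ with $\beta_1$ the reconstructed velocity. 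Reconciling the ``genuine constant'' reading of Theorem \ref{rightactionmommap-thm} with the advective form required here --- that is, pinning down $\beta_1$ and the representation in which the loop actually moves --- is the step I expect to demand the most care.

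Finally, with the transport equation in hand, the Kelvin transport theorem $\frac{d}{dt}\oint_{\gamma_t}\omega=\oint_{\gamma_t}(\partial_t+\pounds_{\beta_1})\omega$ for a loop $\gamma_t$ advected by $\beta_1$ gives $\frac{d}{dt}\oint_{\gamma_t}\big(w^{-1}\mathbf{P}\cdot{\rm d}\mathbf{Q}+{\rm d}\theta\big)=0$, once the transport integrand is set to zero by the identity above (the exact piece $ {\rm d}(i_{\beta_1}\omega)$ of $\pounds_{\beta_1}\omega$ dropping out by Stokes' theorem). Since ${\rm d}\theta$ is exact, $\oint_{\gamma_t}{\rm d}\theta=0$, again by Stokes' theorem, which justifies the first equality in the Corollary's final chain and lets me replace the integrand by $w^{-1}\mathbf{P}\cdot{\rm d}\mathbf{Q}$, completing the proof.
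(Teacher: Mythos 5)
Your proposal is correct and takes essentially the same route as the paper's own (very terse) argument, which likewise combines the conservation of $\mathbf{J}_{\rm S}$ from Theorem \ref{rightactionmommap-thm} with the density transport law $\dot{w}+\nabla_{\!s}\cdot(w\,\beta_1)=0$, read as $(\partial_t+\pounds_{\beta_1})w=0$, to obtain the advected one-form identity for $w^{-1}\mathbf{J}_{\rm S}$, then integrates over a Lagrangian loop and discards the exact piece ${\rm d}\theta$ by Stokes' theorem. Your explicit Leibniz-rule computation and your care in reconciling the Noether constancy of $\mathbf{J}_{\rm S}$ with the advective (Lie-transport) form in the frame where the loop moves simply spell out steps the paper leaves implicit.
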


\begin{remark}[Dual pair structures]
Upon recalling from \cite{HoMa2004} that the term ${\bf J}_1:={\bf P}\cdot{\rm
d}{\bf Q}$ in (\ref{right-action-mommap}) is a momentum map ${\bf J}_1:T^*{\rm Emb}\to\mathfrak{X}^*(S)$ for the right action of Diff(S), one may construct the same dual pair structure as in the geometric description of the EPDiff equation. Indeed, we may introduce the map
\[
{\bf J}_{\rm Sing}({\bf Q},{\bf P})=\int\!{\bf P}(s,t)\,\delta({\bf x-Q}(s,t))\,{\rm d}^ks
\]
that is, the $\mathfrak{X}^*$-component of the singular solution momentum map ${\bf J}$ in Theorem \ref{singsolnmommap-thm}. This is well known \cite{HoMa2004}
to be a momentum map ${\bf J}_{\rm Sing}:T^*{\rm Emb}\to\mathfrak{X}^*(\mathbb{R}^n)$
that generates the left leg of the following dual pair picture\\
\begin{picture}(150,100)(-70,0)%
\put(100,75){$T^{\ast} {\rm Emb}(S,\mathbb{R}^n)$}

\put(78,50){$\mathbf{J}_{\rm Sing}$}

\put(160,50){$\mathbf{J}_1$}

\put(72,15){$\mathfrak{X}^{\ast} (\mathbb{R}^n)$}

\put(170,15){$\mathfrak{X}^{\ast}(S)$}

\put(130,70){\vector(-1, -1){40}}

\put(135,70){\vector(1,-1){40}}

\end{picture}\\
which is the standard dual pair picture associated to the EPDiff equation
\cite{HoMa2004}. Dual pair structures have been used in \cite{MaWe83} to explore the geometric nature of Clebsch variables in fluid systems. In the present context the variables $({\bf Q,P})$ form the Clebsch representation associated to the diffeomorphism group ${\rm Diff}(S)$. 
We refer to the original works \cite{MaWe83,We83} for deeper discussions on the geometric nature of dual pairs.

In the present case the right-action momentum map ${\bf J}_S$ also takes
into account the extra term ${\bf J}_2:=w\,{\rm d}\theta$, which is associated with the space of scalar functions $\mathcal{F}(S)$. This term is again a momentum
map ${\bf J}_2:T^*\mathcal{F}\to\mathfrak{X}^{\ast} (S)$, which
will be used later in this paper for the construction of
another dual pair, associated to the $\mathcal{F}^*$-component
of the singular solution momentum map in Theorem \ref{singsolnmommap-thm}.
\end{remark}

\rem{ 
\subsection{Pairwise collisions in one dimension}
The interaction of two singular solutions may be easily
analyzed by truncating the sums to consider $N=2$. In one dimension this yields
\[
H_2=
\frac12 \Big(\left.P_1^{2}+P_2^{2}
+
2\left.G_{1}(Q_1-Q_2\right)P_1\,P_2
\Big)\right.
+
\frac12 \Big(\left.w_1^{2}+w_2^{2}
+
2\left.G_{1}(Q_1-Q_2\right)\,w_1\,w_2
\Big)\right.
\]
By proceeding in the same way as in \cite{HoSt03}, one defines
\begin{align*}
P&=P_1+P_2
\,;\quad
Q=Q_1+Q_2
\,;\quad
p=P_1-P_2
\,,\quad
q=Q_1-Q_2
\,;\quad
W=w_1+w_2
\,;\quad
w=w_1-w_2
\end{align*}
so that, the Hamiltonian can be written as
\[
\mathcal{H}=\frac12P^2-\frac14(P^2-p^2)\left(1-G_1(q)\right)
+
\frac12W^2-\frac14(W^2-w^2)\left(1-G_2(q)\right)
\]
At this point one writes the equations
\begin{align*}
\frac{dP}{dt}&=-2\frac{\partial \mathcal{H}}{\partial Q}=0
\,,
\hspace{7.7cm}
\frac{dQ}{dt}=2
\frac{\partial \mathcal{H}}{\partial P}=P\left(1+G_1(q)\right)
\\
\frac{dp}{dt}&=
-2\frac{\partial \mathcal{H}}{\partial q}=
-\frac12\left(P^2-p^2\right)G_1^{\,\prime}(q)
-\frac12\left(W^2-w^2\right)G_2^{\,\prime}(q)
\,,
\hspace{1cm}
\frac{dq}{dt}=
2\frac{\partial \mathcal{H}}{\partial p}=
-p\left(1-G_1(q)\right)
\end{align*}
(with $\dot{W}=\dot{w}=0$) that yield
\[
\left(\frac{dq}{dt}\right)^2=
P^2\left(1-G_1(q)\right)^2
-
\big[\,4\mathcal{H}-2W^2+\left(W^2-w^2\right)\left(1-G_2(q)\right)\big]
\left(1-G_1(q)\right)
\]
and finally lead to the quadrature
\[
dt=\frac{dG_1}{G_1^{\,\prime}\sqrt{P^2\left(1-G_1(q)\right)^2
-\big[\,4\mathcal{H}-2W^2+\left(W^2-w^2\right)\left(1-G_2(q)\right)\big]
\left(1-G_1(q)\right)}}
\,.
\]
} 

\bigskip
\section{Extension to anisotropic interactions}
\label{sec.3}

This section extends the previous results on singular solutions to the case when the fluid motion depends on an extra degree of freedom, such as the fluid particle orientation.
This occurs for example in the theory of liquid crystals \cite{Ho2002}. A geometric fluid theory for such systems is already present in the literature regarding Yang-Mills
charged fluids and quark-gluon plasmas \cite{GiHoKu1982,GiHoKu1983}. This work formulates the equations for the fluid momentum $\bf m(x)$, mass density $\rho({\bf x})$ and charge density $C({\bf x})$, where the charge is considered as an extra degree of freedom of each fluid particle. (This is the colour charge in the case of chromo-hydrodynamics for quark-gluon plasmas.) These equations are written
as
\begin{align}
\pp{\rho}{t}&+{\rm div}\left(\rho\,\dede{H}{\bf m}\right)=0
\nonumber
\,,\\
\pp{C}{t}&+{\rm div}\left(C\,\dede{H}{\bf m}\right)={\rm ad}^*_{\dede{H}{C}}\,C
\label{EPChromo}
\,,\\
\pp{\bf m}{t}&+\nabla\cdot\left(\dede{H}{\bf m}\otimes{\bf m}\right)+\left(\nabla\otimes\dede{H}{\bf m}\right)\cdot{\bf m}=-\,\rho\,\nabla\dede{H}{\rho}-\left\langle C,\nabla\dede{H}{C}\right\rangle
_{\mathfrak{g}^*\times\mathfrak{g}}
\,,
\nonumber
\end{align}
where $C$ takes values on the dual Lie algebra $\mathfrak{g}^*$, whose corresponding coadjoint operation is denoted by ad$^*$.
Thus the charge variable $C$ belongs to the space of $\mathfrak{g}^*$-valued
densities, which we denote by
\[
\mathfrak{g}^*(\mathbb{R}^n):={\rm Den}(\mathbb{R}^n)\otimes\mathfrak{g}^*
\,,
\]
so that $C\in\mathfrak{g}^*(\mathbb{R}^n)$. In what follows we shall use the elementary fact that the space $\mathfrak{g}^*(\mathbb{R}^n)$ is dual to $\mathfrak{g}(\mathbb{R}^n)=\mathcal{F}\otimes\mathfrak{g}$, and we shall use the same notation for $\mathfrak{g}$ and $\mathfrak{g}(\mathbb{R}^n)$. The distinction should be clear from the different contexts.

\rem{ 
\begin{remark}
At a careful analysis we should point
out that the right hand side of the second equation shows how the Lie algebra
$\mathfrak{g}$ is considered up to coadjoint orbits under its underlying
(gauge) Lie group $G$, so that $\mathfrak{g}$ should be rigourously substituted by
$\mathfrak{g}/G$. We shall emphasize this fact in the next section
where such a distinction is essential for our discussion.
\end{remark}
}    

Now, the equations above are known to possess a Lie-Poisson Hamiltonian structure dual to the Lie algebra of the semidirect-product Lie group ${\rm Diff}\,\circledS\left(\mathcal{F}\oplus\mathfrak{g}\right)$
\cite{GiHoKu1982,GiHoKu1983}.
One may also consider \emph{geodesic} Euler-Poincar\'e equations on this semidirect-product Lie group. This problem has already been considered in \cite{GiHoTr07} in terms of its singular solutions, although not in relation with momentum maps. For the sake of simplicity, we consider the semidirect product ${\rm Diff}\,\circledS\,\mathfrak{g}$ and denote the corresponding geodesic equations by EP(${\rm Diff}\,\circledS\,\mathfrak{g}$). (The commutative case $\mathfrak{g}=\mathcal{F}\times\mathbb{R}$ reduces to the case studied in Section \ref{sec.2}.) 

In order to construct
the EP(${\rm Diff}\,\circledS\,\mathfrak{g}$) equations, one writes a purely quadratic Hamiltonian
\begin{eqnarray}\label{chromoHam}
H({\bf m},C)
&=&
\frac12\iint\!{\bf m}({\bf x}) \cdot\, G_1({\bf x-x}')\,{\bf m}({\bf x}')\,{\rm d}^n{\bf x}\,{\rm d}^n{\bf x}'
\nonumber\\
&&\hspace{2cm}
+\
\frac12\iint\Big\langle C({\bf x}),\,G_2({\bf x-x}')\,C({\bf x}') \Big\rangle
_{\mathfrak{g}^*\times\mathfrak{g}}
\,{\rm d}^n{\bf x}\,{\rm d}^n{\bf x}'
\,,
\end{eqnarray}
which yields the geodesic equations in covariant form
\begin{align}\nonumber
C_t+\pounds_{G_1*\bf m}\ C&={\rm ad}^*_{\,G_2\,*\,C}\ C
\,,\\
\mathbf{m}_t+\pounds_{G_1*\bf m}\ {\bf m}&=C\diamond (G_2*C)
\,,
\label{EPDiffsigma}
\end{align}
where the Lie derivative and diamond operations were introduced in Section \ref{sec.1}. In order to simplify the discussion, one can specialize to the case when particles have an orientation (or spin) in space and think of the charge density as the distribution of the local particle orientation in space, so that $\boldsymbol{C}\in{\rm Den}(\mathbb{R}^n)\otimes\mathfrak{so}(3)\simeq{\rm Den}(\mathbb{R}^n)\otimes\mathbb{R}^3$ and the Lie bracket is given by the
usual cross product. However, the following result applies in general.

\begin{theorem}\label{singsolth-chromo}
The ${\rm EP}({\rm Diff}\,\circledS\,\mathfrak{g})$ equations admit singular solutions of the form
\begin{equation}
\big({\bf m},C\big)=\sum_{i=1}^N\int\!\big({\bf P}_i(s,t),\mu_i(s,t)\big)\,\delta\!\left({\bf
x-Q}_i(s,t)\right)\,{\rm d}^ks
\,,
\label{singsolnansatz}
\end{equation}
associated with the momentum map
\[
{\bf J}:\underset{i=1}{\overset{N}{\text{\LARGE$\times$}}}\big( T^* {\rm Emb}(S,\mathbb{R}^n)\,\times\,\mathfrak{g}^*(S)\big)\,
\to\ \mathfrak{X}^*(\mathbb{R}^n)\ \text{\large$\circledS$}\ \mathfrak{g}^*(\mathbb{R}^n)
\,.
\]
\end{theorem}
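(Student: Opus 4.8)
The plan is to follow the proof of Theorem \ref{singsolnmommap-thm} almost verbatim, the only structural novelty being that the scalar weights $w_i$ are replaced by the $\mathfrak{g}^*$-valued densities $\mu_i$, which now carry their own \emph{nontrivial} dynamics through the coadjoint action. First I would verify by direct substitution that the ansatz (\ref{singsolnansatz}) solves the covariant equations (\ref{EPDiffsigma}). Substituting $(\mathbf{m},C)$ into the charge equation produces, besides the transport terms familiar from the Abelian case, the extra coadjoint term ${\rm ad}^*_{G_2*C}\,C$, which forces the evolution $\partial_t\mu_i={\rm ad}^*_{(G_2*C)(\mathbf{Q}_i)}\,\mu_i$ rather than $\partial_t\mu_i=0$. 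This is the decisive difference from Section \ref{sec.2}: the internal variables are no longer frozen, and it is this observation that dictates the correct phase-space bracket.

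Next I would equip the extended phase space $T^*{\rm Emb}(S,\mathbb{R}^n)\times\mathfrak{g}^*(S)$ with a Poisson structure that is canonical on the cotangent-bundle factor $(\mathbf{Q},\mathbf{P})$ and Lie-Poisson on the charge factor $\mu\in\mathfrak{g}^*(S)$, namely
\[
\{F,G\}(\mathbf{Q},\mathbf{P},\mu)
=\int\!\left(\frac{\delta F}{\delta\mathbf{Q}}\cdot\frac{\delta G}{\delta\mathbf{P}}-\frac{\delta G}{\delta\mathbf{Q}}\cdot\frac{\delta F}{\delta\mathbf{P}}\right){\rm d}^ks
+\int\!\left\langle\mu,\left[\frac{\delta F}{\delta\mu},\frac{\delta G}{\delta\mu}\right]\right\rangle{\rm d}^ks
\,,
\]
with the sign of the Lie-Poisson summand fixed so as to reproduce the coadjoint term of (\ref{EPDiffsigma}). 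This summand is the analogue of the Kaluza-Klein charge bracket, and it is forced by demanding that the collective Hamiltonian obtained from (\ref{chromoHam}) generate the substituted $\mu_i$-dynamics. With this bracket in hand I would write the pairing against $\beta=(\boldsymbol{\beta}_1,\beta_0)\in\mathfrak{X}(\mathbb{R}^n)\,\circledS\,\mathfrak{g}(\mathbb{R}^n)$,
\[
\Big\langle\mathbf{J}(\mathbf{Q},\mathbf{P},\mu),\beta\Big\rangle
=\int\!\Big(\mathbf{P}(s)\cdot\boldsymbol{\beta}_1(\mathbf{Q}(s))+\big\langle\mu(s),\beta_0(\mathbf{Q}(s))\big\rangle_{\mathfrak{g}^*\times\mathfrak{g}}\Big)\,{\rm d}^ks
\,,
\]
and compute $\{F,\langle\mathbf{J},\beta\rangle\}$.

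The computation of this bracket is the heart of the argument. The canonical part reproduces, exactly as in (\ref{HamVecFld2}), the cotangent-lift contribution to $(\mathbf{Q},\mathbf{P})$; the new feature lives in the charge factor. Differentiating $\langle\mu,\beta_0(\mathbf{Q})\rangle$ with respect to $\mu$ gives $\beta_0(\mathbf{Q})$, so the Lie-Poisson summand contributes $\langle\mu,[\,\delta F/\delta\mu,\beta_0(\mathbf{Q})]\rangle=-\langle{\rm ad}^*_{\beta_0(\mathbf{Q})}\mu,\delta F/\delta\mu\rangle$, which is precisely the infinitesimal generator of the pointwise gauge action of $\mathfrak{g}$ on $\mu$; meanwhile the $\mathbf{Q}$-dependence of $\beta_0(\mathbf{Q})$ contributes, through the canonical bracket, the fiber-translation term $-\langle\mu,{\rm d}\beta_0/{\rm d}\mathbf{Q}\rangle$ to the $\mathbf{P}$-slot. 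Collecting terms I would read off the Hamiltonian vector field $X_\beta$ with $(\mathbf{Q},\mathbf{P},\mu)$-components $\big(\boldsymbol{\beta}_1(\mathbf{Q}),\,-(\tfrac{d\boldsymbol{\beta}_1}{d\mathbf{Q}}^T\!\cdot\mathbf{P}+\langle\mu,\tfrac{d\beta_0}{d\mathbf{Q}}\rangle),\,-{\rm ad}^*_{\beta_0(\mathbf{Q})}\mu\big)$ — the third slot now nonzero, in contrast to (\ref{HamVF}) — and identify it with the infinitesimal generator of the left action of ${\rm Diff}\,\circledS\,\mathfrak{g}$, establishing $\{F,\langle\mathbf{J},\beta\rangle\}=X_\beta[F]$ and hence that $\mathbf{J}$ is a momentum map. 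Infinitesimal equivariance would then follow by the calculation of Theorem \ref{equivar-thm}, the additional pointwise bracket $[\beta_0,\gamma_0]_{\mathfrak{g}}$ in the semidirect-product ${\rm ad}_\beta\gamma$ being supplied exactly by the Lie-Poisson summand; equivariance in turn makes $\mathbf{J}$ a Poisson map, so that the push-forward of the canonical–Lie-Poisson dynamics recovers the solution property already checked by substitution.

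The main obstacle I anticipate is pinning down and justifying the Lie-Poisson term in the phase-space bracket. In the Abelian case the weights were constants of motion and a purely canonical bracket sufficed, whereas here the entire construction rests on recognizing that $\mathfrak{g}^*(S)$ must carry its natural Lie-Poisson structure and that this structure is precisely what the coadjoint term in (\ref{EPDiffsigma}) demands. Verifying that this mixed bracket satisfies the Jacobi identity and that the collective Hamiltonian built from (\ref{chromoHam}) generates the substituted equations of motion for $(\mathbf{Q}_i,\mathbf{P}_i,\mu_i)$ — including the correct ${\rm ad}^*$ evolution of the charges — is the step requiring the most care; everything else is a direct transcription of the Abelian argument.
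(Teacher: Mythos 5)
Your proposal is correct and follows essentially the same route as the paper's proof: direct substitution of the ansatz, the mixed canonical--Lie--Poisson bracket on $T^*{\rm Emb}(S,\mathbb{R}^n)\times\mathfrak{g}^*(S)$ (the paper's equation (\ref{Poisson-br}), whose Jacobi identity the paper discharges by citing the gauged Lie--Poisson structures of \cite{GiHoKu1982,GiHoKu1983,Mo1984,MoMaRa84} rather than verifying it), the same pairing $\langle\mathbf{J},\beta\rangle$, and the same identification of $X_\beta$ as cotangent lift plus fiber translation plus coadjoint motion. The only discrepancy is the sign you chose for the Lie--Poisson summand: the paper takes it with a minus sign, so that the $\mu$-component of $X_\beta$ comes out as $+\,{\rm ad}^*_{\beta_0}\mu$ in agreement with the substituted dynamics (\ref{singsol-eq}) --- exactly the adjustment you yourself flagged as being ``fixed so as to reproduce the coadjoint term.''
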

\begin{proof}
Again, we fix $i$ for convenience and suppress it in the notation. Substitution of the solution ansatz (\ref{singsolnansatz}) into the EP(Diff$\,\circledS\,\mathfrak{g}$) equations yields
\begin{equation}\label{singsol-eq}
\pp{{\bf Q}}{t}=\boldsymbol{\beta}_1({\bf Q})
\,,
\qquad\quad
\pp{{\bf P}}{t}=
-\,{\bf P}\,\cdot\text{\large$\nabla$}_{\bf \!Q\,}\boldsymbol{\beta}_1
-\big\langle \mu, \text{\large$\nabla$}_{\!\bf Q\,}\beta_0\big\rangle
\,,
\qquad\quad
\pp{\mu}{t}=
{\rm ad}^*_{\beta_0}\,\mu
\,.
\end{equation}
In order to define a momentum map, we first need to establish a Poisson structure on $T^*{\rm Emb}\,\bigoplus\,\mathfrak{g}^*$. To this purpose we use the following Poisson bracket \cite{GiHoKu1982,GiHoKu1983}:
\begin{equation}
\{F,G\}({\bf Q},{\bf P}, \mu)\,:=\int
\left(\dede{F}{\bf Q}\cdot\dede{G}{\bf P}-\dede{F}{\bf Q}\cdot\dede{G}{\bf P}\right)\,{\rm d}^k s\,
-
\int
\left\langle
\mu,\,\left[\dede{F}{\mu},\,\dede{G}{\mu}\right]
\right\rangle
\,{\rm d}^k s
\,,
\label{Poisson-br}
\end{equation}
where $\langle\cdot,\,\cdot\rangle$ denotes the pairing on $\mathfrak{g}^*\times\mathfrak{g}$.
Now, if $\beta=(\boldsymbol{\beta}_1,\beta_0)\in\mathfrak{X}(\mathbb{R}^n)\ \text{\large $\circledS$}\ \mathfrak{g}(\mathbb{R}^n)$, then the functional $\langle{\bf J},\beta\rangle$ may be defined as
\[
\left\langle{\bf J}({\bf Q},{\bf P}, \mu),\,\beta\right\rangle
=\int
\Big( \langle \mu,\,\beta_0({\bf Q})\rangle+{\bf P}\cdot\boldsymbol{\beta}_1({\bf
Q})\Big)\,{\rm
d}^ks
\,,
\]
where $\langle\cdot,\,\cdot\rangle$ may now denote the pairing on either $\mathfrak{g}^*\!\times\mathfrak{g}$
or $\mathfrak{X}^*\text{\large $\circledS$}\, \mathfrak{g}^{*\!} \times\mathfrak{X}\,\text{\large $\circledS$}\, \mathfrak{g}$. (No confusion should arise from this notation.)
At this point one calculates the Poisson bracket using equation (\ref{Poisson-br}) as, cf. equation (\ref{HamVecFld2}),
\begin{align}
\big\{
F,\langle{\bf J},\boldsymbol{\beta}\rangle
\big\}
\,=&
\int\!\left(
\frac{\delta F}{\delta \bf Q}\cdot\boldsymbol{\beta}_1({\bf Q})
-
\left(\left\langle \mu,\,\frac{d \beta_0}{d {\bf Q}}\right\rangle
+
\frac{d \boldsymbol{\beta}_1}{d{\bf Q}}^T \cdot {\bf P}
\right) \cdot
\frac{\delta F}{\delta {\bf P}}
\right){\rm d}^ks
+
\int\!\left\langle{\rm ad}^*_{\beta_0}\,\mu,\,\dede{F}{\mu}\right\rangle\,{\rm d}^ks
\,.
\label{HamVecFld3}
\end{align}
Thus, one finds the Hamiltonian vector field $\{F,\langle{\bf J},\boldsymbol{\beta}\rangle\}=X_\beta[F]$, with $({\bf Q},{\bf P}, \mu)$ components
\begin{align*}
X_\beta:=\bigg(\boldsymbol{\beta}_1({\bf Q}),\,
-\,
\left\langle \mu,\,\frac{d \beta_0}{d {\bf Q}}\right\rangle
-
\frac{d \boldsymbol{\beta}_1}{d{\bf Q}}^T \cdot {\bf P},
\,
{\rm ad}^*_{\beta_0}\,\mu
\bigg)
\,.
\end{align*}
The first two components of this vector field identify a Hamiltonian vector field on $T^*{\rm Emb}$ corresponding to the Hamiltonian
\[
H
=\int \Big( \langle \mu,\, \beta_0({\bf Q})\rangle\ +\ {\bf P}\cdot
\boldsymbol{\beta}_1({\bf Q})\Big)\, {\rm d}^ks=\langle{\bf J},\beta\rangle
\,,
\]
which generates compositions of cotangent lifts $T^*$Diff generated by $\boldsymbol{\beta}_1$ with fiber translations
\[
\text{\large$\tau$}_{\!-{\rm d}\langle\mu,\beta_0\rangle}\cdot (q,p) = (q,p-\langle\mu,{\rm d}\beta_0\rangle)
\]
generated by $-\langle\mu,\beta_0\rangle$. The third component generates pure coadjoint motion of the charge variable $\mu$  on $\mathfrak{g}^*$ according to
\[
\mu^{(t)}={\rm Ad}^*_{\,\exp\left(-\,t\,\beta_0\right)}\ \mu^{(0)}
\,,
\]
under the action of the Lie group $G$ whose underlying Lie algebra is $\mathfrak{g}:=T_e G$. Thus, the three-component vector field $X_\beta$ is an infinitesimal generator.
\end{proof}

\begin{remark}[Left action]
Just as in the case of $\rm EP(Diff\,\circledS\,\mathcal{F})$, we can write a similar
left group action of $\rm Diff\,\circledS\,\mathfrak{g}$ on $T^*{\rm Emb}\oplus\mathfrak{g}^*$.
Indeed, by the arguments in the proof above, one sees that the momentum map
$\bf J$ derives from the following left action
\begin{align*}
\Big({\bf Q}^{\,(t)} ,\,{\bf P}^{\,(t)},\, \mu^{\,(t)}\Big)&=
\left(\eta_{t\,}\circ{\bf Q}^{\,(0)} ,\, {\bf P}^{\,(0)}\cdot^{\,} T\big(\eta_t^{-1}\circ{\bf Q}^{\,(0)}\big)-\,{\rm d}\big(\big\langle\mu,\beta_0\big\rangle\circ{\bf Q}^{\,(0)}\big)
,\, {\rm Ad}^*_{\,\exp\left(-\,t\,\beta_0\right)}\ \mu^{(0)}
\right)
\,,
\end{align*}
where $\eta_t=\exp\left(t\,\boldsymbol{\beta}_1\right)\in{\rm Diff}$. The equivariance of $\bf J$ is proved via the same steps as we followed in the isotropic case.
\end{remark}

\begin{remark}[Principal bundle structure of the configuration space]
The proof above uses the fact that the space $T^*{\rm Emb}\times\mathfrak{g}^*$ may be endowed with a Poisson structure. The question remains on how to identify a suitable configuration manifold corresponding to such a phase space. The
answer can be found in the theory of gauged Lie-Poisson structures \cite{Mo1984,MoMaRa84}, such as the one we used in our proof. Indeed, it is well known how this structure arises from the reduced phase space of the configuration principal bundle $B$ 
\[
B={\rm Emb}(S,\mathbb{R}^n)\times G(S)
\,,
\]
where $G(S)$ is the gauge group underlying $\mathfrak{g}(S)$, consisting of $G-$valued maps $g:S\to G$, so that $T_e G(S)=\mathfrak{g}(S)$. In what follows
we shall use the same notation for $G$ and $G(S)$ and no confusion should
arise from this choice.

It is also known that the phase space corresponding to such a configuration manifold is obtained by reduction of the cotangent bundle $T^* B$. Thus our
phase space can be obtained as
\[
T^*{\rm Emb}\times\mathfrak{g}^*
\simeq
T^*B/G
\simeq
\big(T^*{\rm Emb}\times T^*G\big)/G
\simeq
T^*{\rm Emb}\times T^* G/G
\,,
\]
where the last step is justified by the fact that $G$ does not act on $\rm
Emb$. Therefore our geometric treatment follows from the principal bundle structure of the configuration space $B$.

In the Abelian case studied in Section \ref{sec.2}, one has $G=\mathcal{F}$ so that one may identify the gauge group with its Lie algebra. This identification is peculiar of Abelian gauge groups and cannot be performed in general. The next section shows how to treat  the Kaluza-Klein formulation of the non-Abelian case.
\end{remark}

\subsection{Kaluza-Klein collective Hamiltonian}
As in the isotropic case, we see again that the collective Hamiltonian
\begin{multline*}
H_N=
\frac12\,\sum_{i,j}^N
\iint\! {\bf P}_i(s)\cdot {\bf P}_j(s')\ G_1({\bf Q}_i(s)-{\bf Q}_j(s'))\, {\rm d}^ks\,{\rm d}^ks'
\\
+
\frac12\,\sum_{i,j}^N
\iint \!\Big\langle\mu_i(s),\,G_2({\bf Q}_i(s)-{\bf Q}_j(s'))\ \mu_j(s') \Big\rangle {\rm d}^ks\,{\rm d}^ks'
\,,
\end{multline*}
obtained by direct substitution of $\bf J(Q,P)$ in the EP(${\rm Diff}\,\circledS\,\mathfrak{g}$) Hamiltonian, allows for a Kaluza-Klein formulation. However, in this case
the gauge group is not Abelian and we need to proceed more carefully.
In the Kaluza-Klein picture of the motion of a colored particle in a Yang-Mills field, the particle motion is a geodesic on a principal $G$-bundle $B$. The metric on $B$ is $G$-invariant and its geodesics are determined by the $G$-invariant quadratic Hamiltonian on $T^*B$ where the Poisson bracket
is canonical. Specializing to our case yields
\[
Q_{KK}:=B=_{\!}\underset{i=1}{\overset{N}{\text{\LARGE$\times$}}}\big(\,{\rm Emb}\times G\,\big)
\]
and since the second term in the Hamiltonian $H_N$ is $G$-invariant by hypothesis, it  may be lifted to $T^*Q_{KK}$ as follows
\begin{multline*}
H_N({\bf Q}^i,{\bf
P}_i,g^i,p_i)=
\frac12\,\sum_{i,j}^N
\iint\! {\bf P}_i(s)\cdot {\bf P}_j(s')\ G_1({\bf Q}^i(s)-{\bf Q}^j(s'))\, {\rm d}^ks\,{\rm d}^ks'
\\
+
\frac12\,\sum_{i,j}^N
\iint \!\left\langle p_i(s),\,G_2({\bf Q}^i(s)-{\bf Q}^j(s'))\ p_j(s')\right\rangle {\rm d}^ks\,{\rm d}^ks'
\,,
\end{multline*}
where $p_i$ is the conjugate momentum of the group coordinate $g^i\in G$, so that $(g^i,p_i)\in T^*G$, and $\langle\cdot,\,\cdot\rangle$ is now the pairing between tangent and cotangent vectors on $G$. The  momentum $p_i$ is conserved, since it is conjugate to the cyclic variable $g^i$. Thus the Hamiltonian $H_N$ is Klauza-Klein and thereby recovers the conservation of $p_i$.
Such a conservation law becomes coadjoint motion on the dual Lie algebra $\mathfrak{g}^*$,
such that
\[
\dot\mu_i={\rm ad}^*_{\delta H_N/\delta \mu_i}\,\mu_i
\qquad\hbox{(no sum)}
\,,
\]
where $\mu_i\,=\,{g^{_{i\,}}}^{-1\,}\,p_i \ \ \, \forall i=1...N$ (no sum over $i$), exactly as happens for the motion of a rigid body, when $G=SO(3)$.
As a consequence of the above arguments, it is clear how the dynamics (\ref{singsol-eq}) of the singular solutions is Hamiltonian with respect to the Poisson bracket in (\ref{Poisson-br}), which is the sum of a canonical term and a Lie-Poisson term.
\rem{ 
\comment{How is $\mu_i$ related to $p_i$ and $g_i$? Is $\mu_i=g_i^{-1}p_i$ (for each $i$, that is, no sum)?
\\

What are the formulas for the singular solutions for this $J_{sing}$?
\\

What are the dynamical equations for the singular solutions? Are they canonical plus Lie-Poisson?
\\

What are the two-body scattering rules for head-on and overtaking collisions of solutions resulting from these equations?  

\bigskip
CT: The first three questions are now answered in the lines above.}
}  

\rem{ 
In order to see this, one redefines the configuration space in the usual way so that it becomes
the associated bundle
\[
Q_{KK}:=_{\!}\underset{i=1}{\overset{N}{\text{\huge$\times$}}}\big({\rm Emb}(S,\mathbb{R}^n)\times_G\mathfrak{g}(S)\big)
=_{\!}\underset{i=1}{\overset{N}{\text{\huge$\times$}}}\big({\rm Emb}(S,\mathbb{R}^n)\times\mathfrak{g}(S)/G\big)
\]
where $G$ does not act on ${\rm Emb}$  and quotient space $\mathfrak{g}/G$ consists of all the orbits of the type ${\rm Ad}_{g\,}\theta$. Now, since the Hamiltonian is a function on $T^*Q_{KK}$,  it depends on ${\bf
Q}_i$, ${\bf P}_i$ and the orbit spaces
\[
\left({\rm Orb}(\theta^i),{\rm Orb}(\mu^i)\right):=\left\{({\rm Ad}_{g^{-1\,}} \theta^i,{\rm Ad}^*_{g\,} \mu_i)\,|\,g\in G\right\}\subset T^*\mathfrak{g}(S)
\]
of $(\theta^i,\mu_i)\in T^*\mathfrak{g}(S)$, which will be denoted simply by $(\theta^i,\mu_i)$.
It is easy to see that, since $\theta^i\in\mathfrak{g}(S)/G$ are cyclic coordinates, then their conjugate momenta $\mu_i$ are constants of motion, so
that the space ${\rm Orb}(\mu_i)$ of the orbits of $\mu_i\in T^*\mathfrak{g}(S)$ will be unchanged under the Hamiltonian flow generated by $H_N$.
} 

\subsection{The right action momentum map and its implications}\label{sec-KK}
One may also consider the \emph{right} action through the group ${\rm Diff}(S)$. Upon following the same procedure as for the abelian case, one finds the momentum map corresponding to the right action of ${\rm Diff}(S)$
\begin{align}\label{right-momap}
{\bf J}_{\rm S}({\bf Q},{\bf P},g,p)={\bf P}(s)\cdot{\rm d}{\bf Q}(s)
+
\Big\langle p(s),{\rm d}g(s) \Big\rangle
\in\mathfrak{X}^*(S)
\end{align}
where  the Kaluza-Klein phase space is now
\[
T^*Q_{KK}=T^*{\rm Emb}\times T^*G
\,.
\]
This momentum map is again conserved because of the evident symmetry of the Hamiltonian $H_N$ under relabelling $s$ by a change of variables.

\begin{remark}[Kelvin-Noether theorem]$\quad$\\
Upon seeking a Kelvin-Noether theorem for the non-Abelian system, one recognises that this system does not provide any conserved density variable that could be used to construct the loop
integral of a differential one form, as done for EP$({\rm Diff}\,\circledS\,\mathcal{F})$ in Corollary \ref{cor.abelianKNthm}.
For this purpose, it suffices to fix a weight $w\in{\rm Den}(S)$ preserved by the flow to obtain the following circulation theorem
\begin{align*}
\frac{d}{dt}\oint_{\gamma_t}  w^{-1}(s)\, \big({\bf P}(s)\cdot{\rm d}{\bf Q}(s)+  \langle p(s),\,{\rm d}g(s)\rangle\big)
=0
\,.
\end{align*}
As in Section 2, it is interesting to notice that the momentum map ${\bf J}_{\rm S}$ for relabelling symmetry by right action is determined by the sum ${\bf J}_1+{\bf J}_2$ of two {\it distinct} momentum maps, one for {\rm Diff} and the other for the gauge symmetry,
\[
{\bf J}_1({\bf Q,P})={\bf P}(s)\cdot{\rm d}{\bf Q}(s)
\qquad\quad
{\bf J}_1:T^*{\rm Emb}\to\mathfrak{X}^{\ast} (S)
\,,
\]
\[
{\bf J}_2(g,p)= \big\langle p(s),{\rm d}g(s) \big\rangle
\qquad\quad
{\bf J}_2:T^*G\to\mathfrak{X}^{\ast} (S)
\,.
\]
These momentum maps have the same target space, but different image spaces. 
Now, since the pairing $\langle p(s),\,{\rm d}g(s)\rangle$
is invariant under the (left or right) $G$-action by cotangent lifts, it is possible to re-express it as
\[
\langle p(s),\,{\rm d}g(s)\rangle=
\langle g^{-1\,}p,\,g^{-1\,}\nabla_{\!s\,} g\rangle\, {\rm d}s
=:
\langle \mu(s), \boldsymbol{\mathcal{A}}(s)\rangle\, {\rm d}s
=
\langle \mu(s),\mathcal{A}(s)\rangle
\]
for a $\mathfrak{g}$-valued one form $\mathcal{A}(s)=\boldsymbol{\mathcal{A}}(s) {\rm d}s$ (i.e. a pure gauge connection). This result does not depend on the particular choice of left
or right $G$-action, since the invariance property is not affected by this
choice. Consequently, with the definitions $\mu:=g^{-1}p$ and $\mathcal{A}:=g^{-1}{\rm
d}g$ one may rewrite ${\bf J}_2$ as
\[
{\bf J}_2(g,p)= \big\langle p(s),{\rm d}g(s) \big\rangle
=\big\langle \mu(s),\mathcal{A}(s)\big\rangle
\,.
\]
\end{remark}

\begin{remark}
The quantity $\mathcal{A}$ determines the following magnetic component of a Yang-Mills field 
\[
{\cal B}={\rm d}^{\cal A}{\cal A}={\rm d}{\cal A}+\left[{\cal A},\,{\cal A}\right],
\]
which is localized on the embedded subspace $S$ and is intrinsically generated by the moving charge $\mu$. In fact, the connection $\mathcal{A}$
does not represent the Yang-Mills potential of an external force field (not present in this case), usually denoted by ${A}={\bf A}({\bf x})\cdot{\rm d}{\bf x}$, which is rather a $\mathfrak{g}$-valued one form over the physical space $\mathbb{R}^n$.
\newline
More formally, $\cal B$ is the curvature of the connection induced by 
$\mathcal{A}$, which becomes important also in the geometric approach to the dynamics of complex fluids \cite{GaRa08,Ho2002}. In such
an approach, the interpretation of $\mathcal{A}:=g^{-1}{\rm d}g$ as a group one-cocycle plays
a central role, as shown in  \cite{GaRa08}.
\end{remark}

The momentum map ${\bf J}_2$ can now
be used to construct another dual pair, describing the geometry of the dynamics of the gauge charge $\mu$. Indeed, it is well known \cite{MaRa99} that the expression
$\mu=g^{-1}p=:{\bf J}_R(g,p)$ is a momentum  map ${\bf J}_R:T^*G\to\mathfrak{g}^*$ associated to cotangent lifts of right translation.  One may use this map to construct the following dual pair picture

\begin{picture}(150,100)(-70,0)%
\put(118,75){$T^*G(S)$}

\put(92,50){$\mathbf{J}_{R}$}

\put(160,50){$\mathbf{J}_2$}

\put(72,15){$\mathfrak{g}^{\ast} (S)$}

\put(170,15){$\mathfrak{X}^{\ast}(S)$}

\put(130,70){\vector(-1, -1){40}}

\put(135,70){\vector(1,-1){40}}

\end{picture}\\
According to the general definition \cite{We83}, a pair of momentum maps $\mathfrak{h}^*\overset{J_1}{\longleftarrow}
\,P\overset{J_2}{\longrightarrow}\,\mathfrak{g}^*$ is called a {\it dual
pair} if and only if ${\rm Ker\,}TJ_1$ and ${\rm Ker\,}TJ_2$ are symplectically orthogonal to one another. As explained in \cite{HoMa2004}, a necessary condition for $\mathfrak{h}^*\overset{J_1}{\longleftarrow}
\,P\overset{J_2}{\longrightarrow}\,\mathfrak{g}^*$ to be a dual pair is that each Lie group $G_i$ associated to $J_i$ acts transitively on the level sets of $J_k$ with $k\neq i$. Now, Diff($S$) acts transitively on
the level sets of ${\bf J}_R=\mu(s)$, because of the parameterization
freedom. Moreover, the action of $G(S)$ on level sets of ${\bf J}_2$ is transitive too, since it is given by cotangent lifts.
Thus, similar arguments to those in \cite{HoMa2004} allow one to conclude that the above dual pair is properly defined. 

\rem{ 
\subsection{Pairwise collisions in one dimension}
This section presents the interaction of two singular solutions of the $\rm
EP(Diff\,\circledS\,\mathfrak{g})$ equations
in the simple case of one spatial dimension. One begins by inserting the singular solution ansatz
\[
{\bf m}(q,t)=\sum_i P_i(t)\,\delta(q-Q_i(t))
\,,\qquad
{C}(q,t)=\sum_i \mu_i(t)\,\delta(q-Q_i(t))
\]
the Hamiltonian becomes
\[
H_N=\frac12\sum_{i,j} P_i\, P_j \,G_1^{\,ij}
+
\frac12\sum_{ij}\,\left\langle \mu_i,\, G_2^{\,ij\,}\mu_j\right\rangle
\]
with
\[
G_1^{\,ij}=G_1(Q_i-Q_j)
\quad\text{and }\quad
G_2^{\,ij}=G_2(Q_i-Q_j)
\]
equations of motions
\begin{align*}
\dot{Q}_{\,i}&=\frac{\partial H_N}{\partial P_i}
=
\sum_j \left.G_1(Q_i-Q_j\right)P_j
\\
\dot{P}_i&=-\frac{\partial H_N}{\partial Q_i}
=
-\,P_i\sum_j
\left.G_1^\prime(Q_i-Q_j\right) P_j
-\sum_j
\left\langle
\mu_i,\left.G_2^\prime(Q_i-Q_j\right) \mu_j
\right\rangle
\\
\dot{\mu}_i&=\textrm{\large ad}^*_\text{\normalsize$\frac{\partial H_N}{\partial \mu_i}$}\,
\mu_i
=
\sum_j\textrm{\large ad}^*_{_\text{\small$\!\left.G_2(Q_i-Q_j\right)\mu_j$}}
\mu_i
\end{align*}
It is straightforward to verify that the orienton--orienton system has the following eight constants of
motion
\[
H
\,,\quad
P=P_1+P_2
\,,\quad
\mu=\mu_1+\mu_2
\,,\quad
\left|\mu_1\right|^2
\,,\quad
\left|\mu_2\right|^2
\,,\quad
\big\langle\mu_1,\,\mu_2^{\,\sharp}\big\rangle
\]

\noindent
In order to prove the conservation of $P$, take the equation for $P_1$:
\begin{align*}
\dot{P}_1=&-\,P_1\left(
P_1\left.\frac{\partial}{\partial q}\right|_{q=Q_1}\!\!G_1(Q_1-q)
+
P_2\left.\frac{\partial}{\partial q}\right|_{q=Q_1}\!\!G_1(Q_2-q)\right)
\\
&-
\left\langle
\mu_1,\left(
\left.\frac{\partial}{\partial q}\right|_{q=Q_1}\!\!G_2(Q_1-q)\, \mu_1
+
\left.\frac{\partial}{\partial q}\right|_{q=Q_1}\!\!G_2(Q_2-q)\,\mu_2
\right)
\right\rangle
\\
=&
-\,P_1\,P_2\,\partial_{Q_1}G_1(Q_2-Q_1)
-\,\left\langle\mu_1,\partial_{Q_1}G_2(Q_2-Q_1)\,\mu_2\right\rangle
\end{align*}
so that $\dot{P}_1+\dot{P}_2=0$, since $\partial_{Q_1}G_1(Q_2-Q_1)=-\partial_{Q_2}G_1(Q_2-Q_1)$
(analogously for $G_2$).

\noindent
Also one proves
\begin{align*}
\dot{\mu_1}+\dot{\mu_2}&=
\textrm{\large ad}^*_{_\text{\small$G_2^{12}\mu_2$}}\mu_1
+
\textrm{\large ad}^*_{_\text{\small$G_2^{21}\mu_1$}}\mu_2
\\
&=
\textrm{\large ad}^*_{_\text{\small$G_2^{12}\mu_2$}}\mu_1
-
\textrm{\large ad}^*_{_\text{\small$G_2^{12}\mu_2$}}\mu_1
=0\,.
\end{align*}

The conservation of $\theta$ is another simple result which can be proven
by direct verification as follows
\begin{align*}
\dd{}{t}\big\langle\mu_i,\,\mu_j^{\,\sharp}\big\rangle
=&\,
\big\langle\dot{\mu}_i,\,\mu_j^{\,\sharp}\big\rangle
+
\big\langle\mu_i,\,\dot{\mu}_j^{\,\sharp}\big\rangle
\\
=&\,
\sum_{k=1}^2
\left\langle
{\rm ad}^*_{_\text{\small$G_{2\,}^{\,ik}\mu_k$}}\mu_i,\,\mu_j^{\,\sharp}
\right\rangle
+
\sum_{k=1}^2
\left\langle
\mu_i,\,\Big({\rm ad}^*_{_\text{\small$G_{2\,}^{\,jk}\mu_k$}}\mu_j\Big)^{\sharp}
\right\rangle
\\
=&\,
\sum_{k\neq i}
\left\langle
{\rm ad}^*_{_\text{\small$G_{2\,}^{\,ik}\mu_k$}}\mu_i,\,\mu_j^{\,\sharp}
\right\rangle
+
\sum_{k\neq j}
\left\langle
{\rm ad}^*_{_\text{\small$G_{2\,}^{\,jk}\mu_k$}}\mu_j,\,\mu_i^{\,\sharp}
\right\rangle
\\
=&\,
\left\langle
{\rm ad}^*_{_\text{\small$G_{2\,}^{\,12}\mu_2$}}\mu_1,\,\mu_1^{\,\sharp}
\right\rangle
+
\left\langle
{\rm ad}^*_{_\text{\small$G_{2\,}^{\,21}\mu_1$}}\mu_2,\,\mu_1^{\,\sharp}
\right\rangle
+
\left\langle
{\rm ad}^*_{_\text{\small$G_{2\,}^{\,12}\mu_2$}}\mu_1,\,\mu_2^{\,\sharp}
\right\rangle
+
\left\langle
{\rm ad}^*_{_\text{\small$G_{2\,}^{\,21}\mu_1$}}\mu_2,\,\mu_2^{\,\sharp}
\right\rangle
\\
=&\,
\left\langle
{\rm ad}^*_{_\text{\small$G_{2\,}^{\,12}\mu_2$}}\mu_1,\,\mu_1^{\,\sharp}
\right\rangle
-
\left\langle
{\rm ad}^*_{_\text{\small$G_{2\,}^{\,12}\mu_2$}}\mu_1,\,\mu_1^{\,\sharp}
\right\rangle
+
\left\langle
{\rm ad}^*_{_\text{\small$G_{2\,}^{\,12}\mu_2$}}\mu_1,\,\mu_2^{\,\sharp}
\right\rangle
-
\left\langle
{\rm ad}^*_{_\text{\small$G_{2\,}^{\,12}\mu_2$}}\mu_1,\,\mu_2^{\,\sharp}
\right\rangle
\\
=&\, 0
\end{align*}
} 

\section{Kaluza-Klein equations for semi-direct products}
As we have seen from the previous sections, the Kaluza-Klein construction explains how collective motion on semidirect-product Lie groups arises under the singular solution momentum map. In this section, we extend the Kaluza-Klein formulation to continuum equations on a semidirect-product Lie group. The resulting equations apply to the method of \emph{metamorphosis} in the problem of matching shapes using active templates in imaging science \cite{HoTrYo2007}. In this application, the zero level set of the momentum map for right action plays a crucial role.

\subsection{Case of Abelian gauge groups}
From the theory of semidirect-product reduction \cite{MaRaWe84a,MaRaWe84b,HoMaRa}, one knows that the reduced  Lagrangian $L:\mathfrak{X}\,\circledS\,\mathcal{F}\to\mathbb{R}$ is an invariant function on ${\rm Diff}\,\circledS\,\mathcal{F}$. The reduction process may be performed in two ways. The first is to reduce according to the (right) action of the whole group
\[
T\big({\rm Diff}\,\circledS\,\mathcal{F}\big)/_{\,}{\rm Diff}\,\circledS\,\mathcal{F}
\simeq
\mathfrak{X}\,\circledS\,\mathcal{F}
\,,
\]
while the second approach uses the fact that
\[
\Big(T\big({\rm Diff}\,\circledS\,\mathcal{F}\big)/\mathcal{F}\Big)/_{\,}{\rm Diff}
\simeq
\mathfrak{X}\,\circledS\,\mathcal{F}
\,,
\]
which is known as {\it reduction by stages} \cite{CeMaRa01}.

In order to construct the Kaluza-Klein Lagrangian for continuum semidirect-product motion, one must \emph{enlarge} the semidirect-product structure to incorporate the cyclic variables. Indeed, one can consider the invariant reduced Lagrangian as a function
\[
L_{KK}:\mathfrak{X}\times T\mathcal{F}\to\mathbb{R}
\,,
\]
where the Lie algebra $\mathfrak{X}$ acts on $T\mathcal{F}$ by the Lie derivative arising from the tangent lift of the (right) ${\rm Diff}$-action on $\mathcal{F}$.
\rem{ 
whose underlying Lie group is ${\rm Diff}\,\circledS\,T{\cal F}$.
Here the (right) ${\rm Diff}$-action on $T\mathcal{F}$ is constructed
by tangent lifts of the pullback. This yields the pullback on $T\mathcal{F}$,
so that $\eta (\theta,\dot\theta)=(\theta\circ \eta, \dot\theta\circ \eta)\,\ \forall
\eta\in {\rm Diff}$.
}    
In particular, we first consider the unreduced Lagrangian ${\cal L}(\eta,\dot\eta,f,\dot{f})$
on $T{\rm Diff}\times T\mathcal{F}$. Since this is Diff-invariant, one may construct the Kaluza-Klein Lagrangian $L_{KK}$ as
\[
{\cal L}(\eta,\dot\eta,f,\dot{f})=L_{KK}({\bf u},f\,\eta^{-1},\dot{f}\,\eta^{-1})=L_{KK}({\bf u},\varphi,\lambda)
\,,
\]
which then has exactly the same expression as the usual Lagrangian $L$ on $\mathfrak{X}\,\circledS\,\mathcal{F}$, cf. \cite{HoTrYo2007}
\[
L_{KK}({\bf u},\varphi,\lambda)=\frac12\int \!{\bf u} \ Q_1 {\bf u}\,\,{\rm d}^n{\bf x}+\frac12\int \!\lambda\
 Q_2\,\lambda\ {\rm d}^n{\bf x}=L({\bf u},\lambda)
 \,,
\]
although the function $\varphi$ is now considered as a cyclic variable, whose corresponding
momentum (the density $\rho$) is preserved by the flow.

\rem{ 
Upon Legendre transforming,
one can extend the treatment in the previous sections and construct the following
right action momentum map
\[
{\bf J}({\bf m},\theta,\rho)={\bf m}({\bf x})+\rho({\bf x})\,{\rm d}\theta({\bf x})
\]
such that
\[
{\bf J}:\mathfrak{X}^*(\mathbb{R}^n)\,\circledS\,T^*\mathcal{F}(\mathbb{R}^n)\to\mathfrak{X}^*(\mathbb{R}^n)
\]
The motion on $T^*\mathcal{F}$ is now given by the sum of canonical Hamiltonian
dynamics and an advection term, which is given by the action of diffeomorphisms
(relabeling symmetry). Therefore this momentum map is {\it not} conserved
a priori by the flow, since

It is straightforward to write the Kelvin circulation theorem in the form
\[
\frac{d}{dt}\oint_{\gamma_t} \frac1\rho\left({\bf m}+\rho\,{\rm d}\theta\right)
\,=\,
\frac{d}{dt}\oint_{\gamma_t} \frac{\bf m}\rho
\,=\,0
\]
thereby reproducing at the continuum level the results already obtained for collective motion.
}   

From the arguments above we recognize that the Kaluza-Klein Lagrangian is related to the Lagrangian for EP$({\rm Diff}\,\circledS\,\mathcal{F})$. Thus, it is natural to ask how this relation arises from a reduction process. One first observes that
\[
T\big({\rm Diff}\,\circledS\,\mathcal{F}\big)/{\rm
Diff}
\simeq
\big(T{\rm Diff}\times T\mathcal{F}\big)/{\rm
Diff}
\simeq
\mathfrak{X}\times\!\big(T\mathcal{F}/\,{\rm Diff}\big)
\,.
\]
The parenthesis in the last step indicates that functions in $T\cal
F$ are defined modulo the action of diffeomorphisms, so that any element $(n,\nu)\in T{\cal F}/_{\,}\rm Diff$ can be written as $(f,\dot{f})\,\eta^{-1}\,\forall \eta\in\rm Diff$.
This application of the Kaluza-Klein reduction process to the {\it unreduced} Lagrangian ${\cal L}$ differs from ordinary semidirect-product reduction, which proceeds through the following steps
\[
\Big(T\big({\rm Diff}\,\circledS\,\mathcal{F}\big)/\mathcal{F}\Big)/_{\,}{\rm Diff}
\simeq
\Big(T{\rm Diff}\times T\mathcal{F}/\mathcal{F}\Big)/_{\,}{\rm Diff}
\simeq
\mathfrak{X}\times\!\big(\mathcal{F}/_{\,}{\rm Diff}\big)
\simeq
\mathfrak{X}\,\circledS\,\mathcal{F}
\,.
\]
In conclusion, the two Lagrangians $L$ and $L_{KK}$ may be derived from the same configuration Lie group ${\rm Diff}\,\circledS\,\mathcal{F}$, by following {\it different} reduction processes.

One may imagine that the same process may be
followed upon replacing $\mathcal{F}$ by a generic vector space $V$, yielding the Lie algebra $\mathfrak{X}\,\circledS\,TV$. However, in such a case, the gauge Lie group needed to construct a Kaluza-Klein formulation is absent.

\subsection{Extension to non-abelian gauge groups}

All the considerations in the previous section
may also be carried out for the case of anisotropic interactions, simply  by replacing $\mathcal{F}$ by the non-Abelian gauge group $G$. The reduction process underlying the continuum dynamics on $\mathfrak{X}\,\circledS\,\mathfrak{g}$ proceeds as follows
\[
\Big(T\big({\rm Diff}\,\circledS\,G\big)/\mathcal{F}\Big)/_{\,}{\rm Diff}
\simeq
\Big(T{\rm Diff}\,\times\,TG/G\Big)/_{\,}{\rm Diff}
\simeq
\mathfrak{X}\,\times\,\mathfrak{g}/{\rm Diff}
\simeq
\mathfrak{X}\,\circledS\,\mathfrak{g}
\,.
\]
However, for $G$-invariant Lagrangians on $\mathfrak{X}\,\circledS\,\mathfrak{g}$,
we may interpret the dynamics as occurring on the space $\mathfrak{X}\times
TG$, so that the Kaluza-Klein Lagrangian
\[
L_{KK}:\mathfrak{X}\times TG\to\mathbb{R}
\]
with $(g,\dot{g})\in TG$ and $\eta\in {\rm Diff}$ is written after right reduction by Diff as
\[
{\cal L}(\dot{\eta}\eta^{-1},g\eta^{-1},\dot{g}\eta^{-1})=
L_{KK}({\bf u},n,\nu)=\frac12\int \!{\bf u} \ Q_1 {\bf u}\,\,{\rm d}^n{\bf x}+\frac12\int \!\big\langle Q_2\,\nu,\,\nu \big\rangle \ {\rm d}^n{\bf x}
\,,
\]
with definitions ${\bf u}:=\dot{\eta}\eta^{-1}$, $n:=g\eta^{-1}$ and $\nu=\dot{g}\eta^{-1}$.
Now, since such a Lagrangian is also $G$-invariant, then one may write
\[
L_{KK}({\bf u},n^{-1\,}n,n^{-1\,}\nu)=L({\bf u},\chi)
\,,
\]
where $n=g\eta^{-1\!}\in G$ as before and $L({\bf u},\chi)$ with $\chi=n^{-1\,}\nu$ is the Lagrangian on $\mathfrak{X}\,\circledS\,\mathfrak{g}$. Legendre transforming this Lagrangian produces the Hamiltonian $H({\bf m},\boldsymbol{C})$ in (\ref{chromoHam}).
This construction yields the conservation of the conjugate variable \makebox{$p=\delta L_{KK}/\delta\nu$} along the flow of the group of diffeomorphisms, since $n$ is an ignorable coordinate. The reduction process involved in such a system proceeds as follows
\[
T\big({\rm Diff}\,\circledS \,G\big)/{\rm
Diff}
\,\simeq\,
\big(T{\rm Diff}/{\rm Diff}\big)\!\times \!\big(TG/{\rm Diff}\big)
\,\simeq\,
\mathfrak{X}\times\!\big(TG/{\rm Diff}\big)
\,,
\]
where $TG$ is the group of tangent lifts of $G$, which is itself acted on by the diffeomorphisms. Thus, again the two Lagrangians $L$ and $L_{KK}$ may be derived from the {\it same} {\it unreduced Lagrangian ${\cal L}$}. 
Consequently, the geodesic motion on semidirect-product Lie groups of the kind ${\rm Diff}\,\circledS\,G$ {\it always} possesses a Kaluza-Klein construction.

\subsection{Application to metamorphosis}
Lagrangian formulations on ${\rm Diff}\,\circledS\,G$ have been recently considered in \cite{HoTrYo2007}, where the whole theory is extensively studied in the context of imaging science. The Euler-Poincar\'e equations corresponding to a Lagrangian $L({\bf u},n,\nu)$ carrying the cyclic variable $n=g\eta^{-1\!}\in G$ are found to be
\begin{align}\nonumber
\bigg(\frac{\pa}{\pa t}+\pounds_{\bf u\,}\bigg)
&\frac{\delta L}{\delta \bf u}
=
-\left\langle\frac{\delta L}{\delta \nu},{\rm d}\nu\right\rangle
\,,\\
\label{metamorphosis}
\bigg(\frac{\pa}{\pa t}+\pounds_{\bf u\,}\bigg)
&\frac{\delta L}{\delta \nu}
=
0
\,,\\
\nonumber 
\bigg(\frac{\pa}{\pa t}+\pounds_{\bf u\,}\bigg)& n =\,\nu \,,
\end{align}
in which the last equation arises from the partial time derivative of the definition $n=g\eta^{-1\!}\in G$.
These equations imply that the Legendre-transformed variable $p={\delta L}/{\delta \nu}$ is preserved by the flow, which does {\it not} occur in the general case, when $L$ depends also on $n$.

In the general case, one may obtain the dynamics directly from a constrained variational principle $\delta S=0$ with 
\[
S = \int \bigg[ 
L({\bf u},n,\nu) 
+ \bigg\langle 
p, \frac{\partial n}{\partial t} + \pounds_{\bf u\,}n - \nu
\bigg\rangle
\bigg]dt
\,,
\]
where the angle bracket denote $L^2$ pairing. Stationary variations produce
\begin{eqnarray}
0 =\delta S &=&
 \int \bigg[ 
 \bigg\langle 
\frac{\delta L}{\delta \bf u} - p \diamond n,\,{\delta \bf u} 
\bigg\rangle
+
\bigg\langle 
\delta  p, \frac{\partial n}{\partial t} + \pounds_{\bf u\,}n - \nu
\bigg\rangle
\nonumber
\\&&\quad
+\
\bigg\langle 
\frac{\delta L}{\delta \nu} - p,\,\delta  \nu
\bigg\rangle 
+
\bigg\langle 
\frac{\delta L}{\delta n} -  \frac{\partial p}{\partial t} 
+ \pounds_{\bf u\,}^\dagger p 
,\,\delta  n
\bigg\rangle
\bigg]dt
\,,
\end{eqnarray}
in which the diamond operator $(\,\diamond\,)$ is defined via the natural generalization of (\ref{diamond.def}) and $\pounds_{\bf u\,}^\dagger$ with superscript dagger  denotes the $L^2$ adjoint of the Lie derivative so that, in particular, $\langle\pounds_{\bf u\,}^\dagger p,\,\delta n\rangle = \langle p,\,\pounds_{\bf u\,}\delta n\rangle$. Standard Euler-Poincar\'e theory for the case that $n$ is a scalar function and its dual $p$ is a density then implies the following system after a brief calculation, 
\begin{align}\nonumber
\bigg(\frac{\pa}{\pa t}+\pounds_{\bf u\,}\bigg)
&
\bigg(\frac{\delta L}{\delta \bf u} - p \diamond n \bigg)
=0 
\,,\\
\label{metamorphosis1}
\bigg(\frac{\pa}{\pa t}+\pounds_{\bf u\,}\bigg)&
\frac{\delta L}{\delta \nu}
=
\frac{\delta L}{\delta n}
\,,\qquad
\frac{\delta L}{\delta \nu} = p
\,,\\
\nonumber 
\bigg(\frac{\pa}{\pa t}+\pounds_{\bf u\,}\bigg)&n\,=\,\nu 
\,.
\end{align}
System (\ref{metamorphosis1}) possesses an exchange symmetry between  the variables $(n,\,\nu)\in TG$ and their dual variables $(\delta L/\delta n,\,\delta L/\delta \nu)\in TG^*$, and it satisfies the following proposition.

\begin{proposition}\label{meta.equiv}
System (\ref{metamorphosis1}) is equivalent to system (\ref{metamorphosis}) when $\delta L/\delta n=0$.
\end{proposition}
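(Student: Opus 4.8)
The plan is to compare the two systems equation by equation and to reduce the whole statement to a single Leibniz-type identity for the advection operator $\pa_t+\pounds_{\bf u}$ acting on the diamond term. First I would dispose of the easy pairings. The third equations of (\ref{metamorphosis}) and (\ref{metamorphosis1}) coincide verbatim, so nothing is required there. For the second equations I would invoke the constraint $\delta L/\delta\nu=p$ that is part of system (\ref{metamorphosis1}); under the hypothesis $\delta L/\delta n=0$ the right-hand side of the $\nu$-equation of (\ref{metamorphosis1}) vanishes, turning it into $(\pa_t+\pounds_{\bf u})(\delta L/\delta\nu)=0$, which is exactly the second equation of (\ref{metamorphosis}). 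In particular this already yields the advected relation $(\pa_t+\pounds_{\bf u})\,p=0$, which I will use below.

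The substance of the proof lies in matching the first equations. Expanding the first equation of (\ref{metamorphosis1}) by linearity of $\pa_t+\pounds_{\bf u}$ gives
\[
\bigg(\frac{\pa}{\pa t}+\pounds_{\bf u}\bigg)\frac{\delta L}{\delta \bf u}
=
\bigg(\frac{\pa}{\pa t}+\pounds_{\bf u}\bigg)(p\diamond n)
\,,
\]
so the proposition rests on evaluating the right-hand side. The key step I would establish is the Leibniz rule
\[
\bigg(\frac{\pa}{\pa t}+\pounds_{\bf u}\bigg)(p\diamond n)
=
\bigg[\bigg(\frac{\pa}{\pa t}+\pounds_{\bf u}\bigg)p\bigg]\diamond n
+
p\diamond\bigg[\bigg(\frac{\pa}{\pa t}+\pounds_{\bf u}\bigg)n\bigg]
\,,
\]
which follows from the bilinearity of $\diamond$ together with the infinitesimal equivariance of the diamond map under the $\mathfrak{X}$-action; equivalently it can be checked in coordinates from the density, scalar, and one-form-density transformation rules, giving $\pounds_{\bf u}(p\diamond n)=(\pounds_{\bf u}p)\diamond n+p\diamond(\pounds_{\bf u}n)$ with the term $p\,\pa_j\pa_k\lambda$ cancelling by symmetry. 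Feeding in the equations of motion, the first bracket is $(\pa_t+\pounds_{\bf u})p=\delta L/\delta n=0$, so the first summand drops, while the second bracket is $(\pa_t+\pounds_{\bf u})n=\nu$ by the third equation. Hence the right-hand side collapses to $p\diamond\nu=(\delta L/\delta\nu)\diamond\nu$.

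It then remains to identify this diamond with the forcing term of (\ref{metamorphosis}). Since here $n$, and therefore $\nu$, is a scalar function while $p=\delta L/\delta\nu$ is a density, the defining relation (\ref{diamond.def}) gives $\langle(\delta L/\delta\nu)\diamond\nu,\,\boldsymbol{\xi}\rangle=-\langle\delta L/\delta\nu,\,\pounds_{\boldsymbol{\xi}}\nu\rangle=-\langle\delta L/\delta\nu,\,\boldsymbol{\xi}\cdot\nabla\nu\rangle$ for every $\boldsymbol{\xi}\in\mathfrak{X}$, which is precisely the pairing of $-\langle\delta L/\delta\nu,\,{\rm d}\nu\rangle$ against $\boldsymbol{\xi}$. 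Therefore $(\delta L/\delta\nu)\diamond\nu=-\langle\delta L/\delta\nu,\,{\rm d}\nu\rangle$, and the first equation of (\ref{metamorphosis1}) becomes the first equation of (\ref{metamorphosis}). Finally I would note that every manipulation above is an equivalence rather than a one-way implication: reading the computation backwards, using $(\pa_t+\pounds_{\bf u})p=0$ and $(\pa_t+\pounds_{\bf u})n=\nu$ to reassemble $(\pa_t+\pounds_{\bf u})(p\diamond n)$, recovers (\ref{metamorphosis1}) from (\ref{metamorphosis}) together with $\delta L/\delta n=0$, which establishes the stated equivalence.

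I expect the main obstacle to be the Leibniz identity for $\diamond$ under $\pounds_{\bf u}$, and in particular making its sign agree with the scalar-field specialization of (\ref{diamond.def}) so that $(\delta L/\delta\nu)\diamond\nu$ reproduces \emph{exactly} $-\langle\delta L/\delta\nu,\,{\rm d}\nu\rangle$. This is a short but sign-sensitive coordinate verification rather than a conceptual difficulty, and once it is fixed the rest of the argument is bookkeeping.
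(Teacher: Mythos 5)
Your proof is correct and takes essentially the same route as the paper's: expand $\big(\frac{\pa}{\pa t}+\pounds_{\bf u}\big)(p\diamond n)$ via the Leibniz (chain) rule for the diamond operation, substitute the second and third equations of (\ref{metamorphosis1}), and identify $-\,p\diamond \nu = \langle p,\,{\rm d}\nu\rangle$ in the scalar--density case, exactly as in the paper's direct calculation. Your additional touches --- the explicit sign check against the definition (\ref{diamond.def}), the coordinate verification of the Leibniz identity, and the remark that the argument reverses so that the equivalence holds in both directions --- merely make explicit details the paper leaves implicit.
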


\begin{proof}
This proposition follows from a direction calculation using the chain rule for the diamond operation and substituting the last three equations in system (\ref{metamorphosis1}). Namely,
\begin{align}
0
=
\bigg(\frac{\pa}{\pa t}+\pounds_{\bf u\,}\bigg)
&
\bigg(\frac{\delta L}{\delta \bf u} - p \diamond n \bigg)
\nonumber\\
=
\bigg(\frac{\pa}{\pa t}+\pounds_{\bf u\,}\bigg)
&
\frac{\delta L}{\delta \bf u}
- 
\frac{\delta L}{\delta n}\diamond n
-
p\diamond \nu
\,.
\nonumber
\end{align}
When $n$ is a scalar and $p$ is a density, then $-\,p\diamond \nu
= \langle p,\,d\nu \rangle$ and setting $\delta L/\delta n=0$ recovers the first equation in the system (\ref{metamorphosis}). 
\end{proof}

\begin{remark}
The Legendre transformation of the constrained Lagrangian 
defines the Hamiltonian
\begin{equation}
H({\bf m}, p,\,n) = 
\langle {\bf m},\,{\bf u}  \rangle
+
\langle p,\,\nu\rangle
- 
L({\bf u}, n, \nu)
\,,
\label{Legendre-trans}
\end{equation}
in terms of the fiber derivatives
\begin{equation}
{\bf m}:=\frac{\delta L}{\delta{\bf u} }
\quad\hbox{and}\quad
p:= \frac{\delta L}{\delta\nu}
\,.
\label{fiber.deriv}
\end{equation}
The variational derivatives of the Hamiltonian are found by substituting into the Legendre transformation (\ref{Legendre-trans}) as
\begin{eqnarray}
\delta H &=& 
\bigg\langle \frac{\delta H}{\delta{\bf m} },\,\delta{\bf m} \bigg\rangle
+
\bigg\langle \frac{\delta H}{\delta p},\,\delta{p} \bigg\rangle
+
\bigg\langle \frac{\delta H}{\delta{n} },\,\delta{n} \bigg\rangle
\nonumber\\
&=&
\bigg\langle {\bf u},\,\delta{\bf m} \bigg\rangle
+
\bigg\langle \nu,\,\delta{p} \bigg\rangle
-
\bigg\langle \frac{\delta L}{\delta{n} },\,\delta{n} \bigg\rangle
\nonumber\\&&
+\
\bigg\langle  {\bf m}-\frac{\delta L}{\delta{\bf u} },\,\delta{\bf u} \bigg\rangle
+
\bigg\langle p-\frac{\delta L}{\delta{\nu} },\,\delta{\nu} \bigg\rangle
\,.
\label{fiber.deriv}
\end{eqnarray}
Hence, $\delta H/\delta  {\bf m} = {\bf u}$, $\delta H/\delta p = \nu$ and the semidirect-product Lie-Poisson Hamiltonian equations corresponding to the system (\ref{metamorphosis}) are
\begin{align}\nonumber
&\bigg(\frac{\pa}{\pa t}+\pounds_{\delta H/\delta {\bf m}}\bigg){\bf m} 
=
-\, \frac{\delta H}{\delta n} \diamond n 
+ p\diamond \frac{\delta H}{\delta p}
\,,\\
\label{metamorphosis2}
&\bigg(\frac{\pa}{\pa t}+\pounds_{\delta H/\delta {\bf m}}\bigg)p
=
-\,\frac{\delta H}{\delta n}
\,,\\
\nonumber 
&\bigg(\frac{\pa}{\pa t}+\pounds_{\delta H/\delta {\bf m}}\bigg)n
 = \frac{\delta H}{\delta p} 
\,.
\end{align}
Perhaps not unexpectedly, the first equation may also be written to agree with the motion equation in (\ref{metamorphosis1}) as 
\begin{equation}
\bigg(\frac{\pa}{\pa t}+\pounds_{\delta H/\delta {\bf m}}\bigg){\bf m} 
=
\bigg(\frac{\pa}{\pa t}+\pounds_{\delta H/\delta {\bf m}}\bigg)
(p \diamond n)
\,.
\label{mom.conserv}
\end{equation}
\end{remark}

\subsection{The Kelvin circulation theorem}
Proposition \ref{meta.equiv} allows the Kelvin circulation theorem for these semidirect-product systems with $\delta L/\delta n=0$ to be expressed in two ways, upon introducing a conserved density variable $\rho$, satisfying
\[
\bigg(\frac{\pa}{\pa t}+\pounds_{\bf u\,}\bigg)\rho = 0
\,.
\]
On one hand, the first equation in the system (\ref{metamorphosis}) and the general theory of dynamics on semidirect-product Lie groups \cite{HoMaRa} imply, with definitions (\ref{fiber.deriv}), that
\begin{equation}
\frac{d}{dt}\oint_{\gamma_t} \frac{\bf m}\rho \,=\,-
\oint_{\gamma_t} \frac1\rho\, \big\langle p,{\rm d}\nu\big\rangle
\,,
\label{SDP.circ}
\end{equation}
where $\gamma_t$ is a closed loop moving with the flow of the velocity vector field. 
On the other hand, the first equation in the system (\ref{metamorphosis1}) implies that the Kelvin circulation theorem may also be expressed as
\begin{equation}
\frac{d}{dt}\oint_{\gamma_t} \frac1\rho\,
\Big({\bf m}+\big\langle p,{\rm d}n\big\rangle\Big)
\,=\,
0
\,,
\label{EPmom.circ}
\end{equation}
where the sum ${\bf m}_{tot}:={\bf m}+\big\langle p,{\rm d}n\big\rangle$ is the total momentum. This circulation theorem for total momentum is the natural extension to the continuum description of formula (\ref{right-momap}) for preservation of the right-invariant momentum map.
The two circulation laws (\ref{SDP.circ}) and (\ref{EPmom.circ}) are shown to be equivalent in the following.
\begin{proposition} The two forms of the Kelvin circulation theorem in equations (\ref{SDP.circ}) and (\ref{EPmom.circ}) are equivalent. 
\end{proposition}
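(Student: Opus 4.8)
The plan is to derive both circulation laws from a single principle, Kelvin's transport theorem, and then to show that their integrands differ only by a transport-exact quantity. Recall that for a genuine one-form $\alpha$ carried along a loop $\gamma_t$ moving with the velocity field ${\bf u}$, one has $\frac{d}{dt}\oint_{\gamma_t}\alpha=\oint_{\gamma_t}\big(\frac{\pa}{\pa t}+\pounds_{\bf u}\big)\alpha$. First I would verify that ${\bf m}/\rho$ and $\rho^{-1}\langle p,{\rm d}n\rangle$ are both true one-forms: dividing a one-form density by the advected scalar density $\rho$ removes the density weight, so Kelvin's theorem applies in its closed-loop form to each of them. Expanding (\ref{EPmom.circ}) and substituting (\ref{SDP.circ}) for the ${\bf m}/\rho$ contribution then reduces the equivalence to the single identity
\[
\frac{d}{dt}\oint_{\gamma_t}\frac1\rho\,\big\langle p,{\rm d}n\big\rangle
=\oint_{\gamma_t}\frac1\rho\,\big\langle p,{\rm d}\nu\big\rangle
\,.
\]

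The central computation is the transport derivative of the one-form $\omega:=\rho^{-1}\,p\,{\rm d}n$. Writing $\omega=f\,{\rm d}n$ with $f:=p/\rho$ and using the Leibniz rule for $\frac{\pa}{\pa t}+\pounds_{\bf u}$, I would split the derivative into a term $\big[(\frac{\pa}{\pa t}+\pounds_{\bf u})f\big]\,{\rm d}n$ and a term $f\,(\frac{\pa}{\pa t}+\pounds_{\bf u})\,{\rm d}n$. The first term vanishes because $f$ is the ratio of two advected densities: from the second equation of (\ref{metamorphosis}) with $p=\delta L/\delta\nu$ one has $(\frac{\pa}{\pa t}+\pounds_{\bf u})p=0$, and by hypothesis $(\frac{\pa}{\pa t}+\pounds_{\bf u})\rho=0$, so $f$ is an advected scalar annihilated by the transport operator. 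For the second term I would use that the Lie derivative commutes with ${\rm d}$, giving $(\frac{\pa}{\pa t}+\pounds_{\bf u})\,{\rm d}n={\rm d}\big[(\frac{\pa}{\pa t}+\pounds_{\bf u})n\big]={\rm d}\nu$ by the kinematic equation $(\frac{\pa}{\pa t}+\pounds_{\bf u})n=\nu$, the last line of (\ref{metamorphosis}). Hence $(\frac{\pa}{\pa t}+\pounds_{\bf u})\omega=f\,{\rm d}\nu=\rho^{-1}\langle p,{\rm d}\nu\rangle$, and the transport theorem applied to $\omega$ yields the displayed identity at once.

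Finally I would assemble the pieces: adding this identity to (\ref{SDP.circ}) cancels its right-hand side and reproduces (\ref{EPmom.circ}), and since every step is reversible the two statements are equivalent. I expect the only delicate point to be the bookkeeping of density weights, namely checking that each quantity to which Kelvin's theorem is applied is a one-form rather than a one-form density; this is ensured by the division by $\rho$ but must be confirmed term by term. It is precisely here that the hypothesis $\delta L/\delta n=0$ enters, through Proposition \ref{meta.equiv}, since it is what guarantees the advection laws $(\frac{\pa}{\pa t}+\pounds_{\bf u})p=0$ and $(\frac{\pa}{\pa t}+\pounds_{\bf u})n=\nu$ on which the whole computation rests.
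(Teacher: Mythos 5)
Your proposal is correct and follows essentially the same route as the paper: both reduce the equivalence to the single identity $\frac{d}{dt}\oint_{\gamma_t}\rho^{-1}\langle p,{\rm d}n\rangle=\oint_{\gamma_t}\rho^{-1}\langle p,{\rm d}\nu\rangle$, proved from the advection law $(\partial_t+\pounds_{\bf u})p=0$, the kinematic equation $(\partial_t+\pounds_{\bf u})n=\nu$, and the commutation of ${\rm d}$ with the Lie derivative. The only cosmetic difference is bookkeeping: you absorb $\rho^{-1}$ into the advected scalar $f=p/\rho$ and apply the transport theorem directly, whereas the paper pulls the loop integral back to $\gamma_0$ with $\eta_t^*$ and carries $1/\rho_0$ outside the derivative --- these manipulations are equivalent precisely because $\rho$ is advected.
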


\begin{proof}
This statement will hold, provided
\[
\frac{d}{dt}\oint_{\gamma_t} \frac1\rho\,\big\langle p,{\rm
d}n\big\rangle \,=\, \oint_{\gamma_t} \frac1\rho\, \big\langle
p,{\rm d}\nu\big\rangle \,,
\]
which may be verified directly, by using the second two equations in the system (\ref{metamorphosis}) as
\begin{align*}
\frac{d}{dt}\oint_{\gamma_t} \frac1\rho\,\big\langle p,{\rm d}n\big\rangle
&\,=\,
\oint_{\gamma_0} \frac1{\rho_0}\,\frac{d}{dt}\Big(\eta^*_t\big\langle p,{\rm d}n\big\rangle\Big)
\,=\,
\oint_{\gamma_0} \frac1{\rho_0}\,\eta^*_t\!\left(\frac{\partial}{\partial t}\big\langle p,{\rm d}n\big\rangle+\pounds_{\bf u\,}\big\langle p,{\rm d}n\big\rangle\right)
\\
&\,=\, \oint_{\gamma_0} \frac1{\rho_0}\,\eta^*_t\left( \left\langle
\frac{\partial p}{\partial t}+\pounds_{\bf u\,} p,{\rm
d}n\right\rangle + \left\langle p,{\rm d}\frac{\partial n}{\partial
t}+\pounds_{\bf u\,}{\rm d}n\right\rangle   \right)
\\
&\,=\, \oint_{\gamma_0} \frac1{\rho_0}\,\eta^*_t\left( \left\langle
p,{\rm d}\frac{\partial n}{\partial t}+ {\rm d}\pounds_{\bf
u\,}n\right\rangle \right)
\\
&\,=\, \oint_{\gamma_0} \frac1{\rho_0}\,\eta^*_{t\,} \big\langle
p,{\rm d}\nu\big\rangle \,=\, \oint_{\gamma_t} \frac1{\rho_t}\,
\big\langle p,{\rm d}\nu\big\rangle \,,
\end{align*}
where one recalls that the exterior differential commutes with the Lie derivative and the last step follows from the equation for $\partial n/\partial t$.
\end{proof}

\medskip
\noindent
This proposition extends the arguments in Section \ref{sec-KK} to the non-Abelian case in the continuum fluid description.

\begin{remark}
The zero level set of the total momentum, cf. equation (\ref{right-momap}),
\begin{equation}
{\bf m}+\big\langle p,{\rm d}n\big\rangle = 0
\,,
\label{zeromomtot}
\end{equation}
is preserved by the first equation in the system (\ref{metamorphosis}). The preservation of zero total momentum is a key step in the metamorphosis approach using active templates in imaging science, because the zero value is imposed by the requirement that an initial image would evolve to match a prescribed final image at a certain end point in time  \cite{HoTrYo2007}. 

The zero level set condition (\ref{zeromomtot}) for total momentum imposes the relation
\begin{equation}
{\bf m} = -\, \big\langle p,{\rm d}n\big\rangle
\,.
\end{equation}
This is the equivariant momentum map obtained from the cotangent-lift of the right action of Diff on the gauge group $G$, defined by 
\begin{equation}
\big\langle 
{\bf m},\, {\bf u} 
\big\rangle_{\mathfrak{g}^*\times\mathfrak{g}}
= \big\langle p\diamond n,\,{\bf u}\big\rangle
_{\mathfrak{g}^*\times\mathfrak{g}}
= -\,\big\langle p,\,\pounds_ {\bf u}n\big\rangle_{T^*G}
\,.
\end{equation}
Thus, the zero level set condition (\ref{zeromomtot}) for total momentum is itself a momentum map. This particular momentum map also appears in the application of the classical Clebsch method of introducing canonical variables for fluid dynamics. See, e.g., \cite{HoKu1983-Clebsch}. 
\end{remark}

\rem{ 
To proceed further one constructs the momentum map
\[
{\bf J}({\bf m},g,p)={\bf m(x)}+\langle p({\bf x}),\,{\rm d}g({\bf x})\rangle
\qquad\quad
{\bf J}:\mathfrak{X}^*\,\circledS\,T^*G\to\mathfrak{X}
\]
which is conserved by the Hamiltonian $H({\bf m},\boldsymbol{C})$ in (\ref{chromoHam}).
One then reduces
\[
\langle p,\,{\rm d}g\rangle=\langle g^{-1}p,\,g^{-1}{\rm d}g\rangle=\langle \boldsymbol{C},\,{\bf A}\rangle
\]
to construct
\[
{\bf J}^{A}({\bf m},\boldsymbol{C})={\bf m(x)}+\langle \boldsymbol{C}({\bf x}),\,{\bf
A}({\bf x})\rangle
\qquad\quad
{\bf J}^{A}:\mathfrak{X}^*\,\circledS\,\mathfrak{g}^*\to\mathfrak{X}
\]
where $\mathfrak{X}^*\,\circledS\,\mathfrak{g}^*$ is endowed with the following
Poisson bracket
\begin{multline*}
\{G,H\}=
-
\int
{\bf m}\cdot\left[
\left(\frac{\delta G}{\delta {\bf m}}\,
\cdot\nabla\right)
\frac{\delta H}{\delta {\bf m}}
-
\left(\frac{\delta H}{\delta {\bf m}}\,
\cdot\nabla\right)
\frac{\delta G}{\delta {\bf m}}
\right]\,{\rm d}^n{\bf x}
\\
-
\int
\boldsymbol{C}\cdot\left[
\left(
\frac{\delta G}{\delta {\bf m}}\,
\cdot\nabla\right)
\frac{\delta H}{\delta \boldsymbol{C}}
-
\left(\frac{\delta H}{\delta {\bf m}}\,
\cdot\nabla\right)
\frac{\delta G}{\delta \boldsymbol{C}}
\right]\,{\rm d}^n{\bf x}
\,-\,
\bigg\langle \boldsymbol{C},\,\left[
\frac{\delta G}{\delta \boldsymbol{C}},\,
\frac{\delta H}{\delta \boldsymbol{C}}
\right]_\mathfrak{\!g}\bigg\rangle
\end{multline*}
}   

\section{Conclusions and open questions} We have shown how continuum equations on a certain class of semidirect-product Lie groups allow for singular solution momentum maps arising from the left action of diffeomorphisms on the $G$-bundle ${\rm Emb}(S,\mathbb{R}^n)\times
G(S)$. On the other hand, the right action on ${\rm Emb}\times G$ has been shown to yield another momentum map that recovers the Kelvin-Noether theorem.

These results arose from the observation that the collective
dynamics on ${\rm Emb}\times G$ was generated by a Kaluza-Klein Hamiltonian, thereby recovering the conservation of a gauge charge from a cyclic coordinate in the gauge group $G$.

The Kaluza-Klein construction for the collective motion was implemented in the continuum description by considering the semidirect product ${\rm Diff}\,\circledS\,G$ as the product of the diffeomorphisms with a gauge group $G$. The Kaluza-Klein construction implies the Kelvin-Noether theorem.

An important open question is whether the singular solutions (\ref{SDsingsoln}) and (\ref{singsolnansatz}) emerge  from smooth initial conditions. For example, the Camassa-Holm equation shows the spontaneous emergence of singular solutions from {\it any} smooth initially confined velocity configuration \cite{CaHo1993} and it is reasonable to ask whether this feature is shared by the class of  two-component Abelian and non-Abelian Camassa-Holm systems considered here. This question is being pursued elsewhere \cite{HoONaTr2008}.

Another open question concerns more general semidirect products.
In fact, the present discussion has considered only semidirect products of the Diff group with $G$-valued scalar functions. However, in physical applications one may also find semidirect products of the form ${\rm Diff}\,\circledS\,{\cal T}$, where $\cal T$ denotes tensor fields in physical space. The most important example is probably ideal magnetohydrodynamics, where ${\cal T}$ is the space of exact two forms (cf. e.g. \cite{HoMaRa}). The dynamics on such products differs substantially from the cases considered here and deserves further investigation.

\bigskip
\subsection*{Acknowledgements}
We are grateful to David Ellis, Fran\c{c}ois Gay-Balmaz, Andrea
Raimondo, Tudor Ratiu, Alain Trouv\'e and Laurent Younes for stimulating discussions. This work was partially supported by  the Royal Society of London Wolfson Research Merit Award.

\bibliographystyle{unsrt}

\end{document}